\newtheorem{theorem}{Theorem}
\newtheorem{lemma}{Lemma}
\newtheorem{corollary}{Corollary}
\newtheorem{definition}{Definition}
\newcommand{\Lang}{\ensuremath{\mathbf{\forall}^{\pi}_{0,2}}\xspace}
\newcommand{\LangBounded}[1]{\ensuremath{(\Lang)^{\leq #1}}\xspace}
\newcommand{\assignment}[1]{M^{#1}}
\newcommand{\pairf}[1]{\pi^{#1}}
\newcommand{\inter}{I}
\newcommand{\ipairf}{\pairf{\inter}}
\newcommand{\iassignment}{\assignment{\inter}}
\newcommand{\interp}{J}
\newcommand{\ipairfp}{\pairf{\interp}}
\newcommand{\iassignmentp}{\assignment{\interp}}
\newcommand{\myTimes}{\mathrel{\times_{\pi}}}
\newcommand{\pairs}[2]{\mathrm{Pairs}_{\pi^{#1}}(#2)}
\newcommand{\Langdom}{\ensuremath{\mathbf{\forall}^{\pi +2 \dom}_{0,2}}\xspace}
\newcommand{\myDelta}{D_{\varphi}} 
\newcommand{\D}{\corr{\mathsf{U}}}
\newcommand{\nat}{\mathbb{N}}
\newcommand{\dominoSys}{\mathbb{D}}
\newcommand{\peanSys}{\mathbb{S}}
\newcommand{\peanN}{\mathcal{N}}
\newcommand{\peanZ}{\mathcal{Z}}
\newcommand{\peanS}{\mathcal{S}}
\newcommand{\isPeanoSys}{\mathtt{is\_Peano}}
\newcommand{\Partition}{P}
\newcommand{\isPartition}{\mathsf{partition}}
\newcommand{\partn}{A}
\newcommand{\mN}{\mathsf{N}}
\newcommand{\mQ}{\mathsf{Q}}
\newcommand{\mS}{\mathsf{S}}
\newcommand{\mZ}{\mathsf{Z}}
\newcommand{\horcom}{\mathsf{hor}}
\newcommand{\vercom}{\mathsf{ver}}
\newcommand{\vx}{x}
\newcommand{\vy}{y}
\newcommand{\svx}{x}
\newcommand{\svy}{y}
\newcommand{\svz}{z}
\newcommand{\mvx}{f}
\newcommand{\mvy}{g}
\newcommand{\mvz}{h}
\newcommand{\sx}{u}
\newcommand{\sy}{v}
\newcommand{\corr}[1]{#1}
\title{A decidable quantified fragment of set theory with ordered
pairs and some undecidable extensions\thanks{Work partially supported
by the INdAM/GNCS 2012 project
\emph{``Specifiche insiemistiche ese\-guibili e loro verifica 
formale''} and by Network Consulting Engineering Srl.}} 
\author{Domenico Cantone
\institute{Department of Mathematics and Computer Science\\
University of Catania, Italy}
\email{cantone@dmi.unict.it}
\and
Cristiano Longo
\institute{Network Consulting Engineering\\
Valverde, Catania, Italy} \email{cristiano.longo@nce.eu} }
\begin{document}


\maketitle

\begin{abstract}
In this paper we address the decision problem for a fragment of set
theory with restricted quantification which extends the language
studied in \cite{BreFerOmoSch1981} with pair related quantifiers and
constructs, in view of possible applications in the field of
\emph{knowledge representation}.
We will also show that the decision problem for our language has a
non-deterministic exponential time complexity.  However, for the
restricted case of formulae whose quantifier prefixes have length
bounded by a constant, the decision problem becomes
\textsc{NP}-complete.  We also observe that in spite of such
restriction, several useful set-theoretic constructs, mostly related
to maps, are expressible.  Finally, we present some undecidable
extensions of our language, involving any of the operators domain,
range, image, and map composition.
\end{abstract}

\section{Introduction}\label{INTRO}

The intuitive formalism of set theory has helped providing solid and
unifying foundations to such diverse areas of mathematics as geometry,
arithmetic, analysis, and so on.
Hence, positive solutions to the decision problem for fragments of
set theory can have considerable applications to the automation of
mathematical reasoning and therefore in any area which can take
advantage of automated deduction capabilities.

The decision problem in set theory has been intensively studied
in the context of \emph{Computable Set Theory} (see
\cite{CanFerOmo89a,CanOmoPol01,SchCanOmo11}), and decision
procedures or undecidability results have been provided for several
sublanguages of set theory.
\emph{Multi-Level Syllogistic} (in short \mls, cf.\
\cite{FerOmoSch1980}) was the first unquantified sublanguage of set
theory that has been shown to have a solvable satisfiability problem.
We recall that \mls is the Boolean combinations of atomic formulae
involving the set predicates $\in$, $\subseteq$, $=$, and the Boolean
set operators $\cup$, $\cap$, $\setminus$.  Numerous
extensions of \mls with various combinations of operators (such as
singleton, powerset, unionset, etc.)  and predicates (on finiteness,
transitivity, etc.)  have been proved to be decidable.  Sublanguages
of set theory admitting explicit quantification (see for example
\cite{BreFerOmoSch1981, OmoPol2010, OmoPol2012, CanLonNic2011}) are of
particular interest, since, as reported in \cite{BreFerOmoSch1981},
they allow one to express several set-theoretical constructs using
only the basic predicates of membership and equality among sets.

Applications of Computable Set Theory to \emph{knowledge
representation} have been recently investigated in
\cite{CanLonPis2010, CanLonNic2011}, where some interrelationships
between (decidable) fragments of set theory and description logics
have been exploited.\footnote{We recall that description logics are a
well-established framework for knowledge representation; see
\cite{DLHANDBOOK2} for an introduction.}
As knowledge representation mainly focuses on representing
relationships among items of a particular domain, any set-theoretical
language of interest to knowledge representation should include a
suitable collection of operators on \emph{multi-valued
maps}.
\footnote{According to \cite{SchDewSchDub1986}, we use the term
`maps' to denote sets of ordered pairs.}

Non-deterministic exponential time decision procedures for two
unquantified fragments of set theory involving map related constructs
have been provided in \cite{FOS80, CanSch91}.  As in both
cases the map domain operator is allowed together with all the
constructs of \mls, it turns out that both fragments have an
\textsc{ExpTime}-hard decision problem (cf.\ \cite{CanLonNic2010}).
On the other hand, the somewhat less expressive fragment \mlsscart has
been shown to have an \textsc{NP}-complete decision problem in
\cite{CanLonNic2010}, where \mlsscart is a two-sorted
language with set and map variables, which involves various map
constructs like Cartesian product, map restrictions, map inverse, and
Boolean operators among maps, and predicates for single-valuedness,
injectivity, and bijectivity of maps.

In \cite{BreFerOmoSch1981}, an extension of the quantified fragment
$\forall_{0}$ (studied in the same paper---here the subscript `$0$' 
denotes that quantification is restricted) with \emph{single-valued}
maps, the map domain operator, and terms of the form $f(t)$, with $t$
a function-free term, was considered.  We recall that
$\forall_{0}$-formulae are propositional combinations of restricted
quantified prenex formulae
$(\forall y_1 \in z_1) \cdots (\forall y_n \in z_n)p$\,,
where $p$ is a Boolean combination of atoms of the types $x \in y$,
$x=y$, \corr{and \emph{quantified variables nesting} is not allowed, in the 
sense that any quantified variable $y_i$ can not occur at the
right-hand side of a membership symbol $\in$ in the same quantifier prefix
(roughly speaking, no $z_j$ can be a $y_i$)}. 
%
More recently, a decision procedure for a new fragment of set theory,
called $\Forallpizero$, has been presented in \cite{CanLonNic2011}.
\corr{The superscript ``$\pi$'' denotes the presence of operators
related to ordered pairs.} Formulae of the fragment $\Forallpizero$,
to be reviewed in Section \ref{DECPROC}, involve the operator
$\nonpairs{\cdot}$, which intuitively represents the collection of the
non-pair members of its argument, and terms of the form $[x,y]$, for
ordered pairs.  The predicates $=$ and $\in$ allowed in it can occur
only within atoms of the forms $x=y$, $x \in \nonpairs{y}$, and $[x,y]
\in z$; quantifiers in \Forallpizero-formulae are restricted to the
forms $(\forall x \in \nonpairs{y})$ and $(\forall [x,y] \in z)$, and,
much as in the case of the fragment $\forall_{0}$, quantified
variables nesting is not allowed.

In this paper we solve the decision problem for the
extension \Lang of the fragment $\forall_{0}$ with ordered pairs
and prove that, under particular conditions, our decision procedure
runs in non-deterministic polynomial time.
\Lang is a two-sorted \corr{(as indicated by the second subscript ``$2$'')} quantified fragment of set theory which allows
restricted quantifiers of the forms $(\forall \svx \in \svy)$,
$(\exists \svx \in \svy)$, $(\forall [\svx, \svy] \in \mvx)$,
$(\exists [\svx, \svy] \in \mvx)$, and literals of the forms $\svx \in
\svy$, $[\svx, \svy] \in \mvx$, $\vx=\vy$, $\mvx = \mvy$, where
$\svx$, $\svy$ are set variables and $\mvx$, $\mvy$ are map variables.
Considerably many set-theoretic constructs are expressible in it, as
shown in Table \ref{SETCONS}.  In fact, the language \Lang is also an
extension of \mlsscart.  However, as will be shown in Section
\ref{UNDEC}, it is not strong enough to express inclusions like $\svx
\subseteq \dom(\mvx)$, $\svx \subseteq \range(\mvx)$, $\svx \subseteq
\mvx[\svy]$, and $h \subseteq f \circ g$, but only those in which the
operators domain, range, \corr{(multi-)}image, and map composition are allowed to
appear on the left-hand side of the inclusion operator $\subseteq$.


The paper is organized as follows.  Section \ref{PREL} provides some
preliminary notions and definitions.  In Section \ref{LANG} we give
the precise syntax and semantics of the language \Lang.  Decidability
and complexity of reasoning in the language \Lang are addressed
in Section \ref{DECPROC}.  Some undecidable extensions of \Lang are
then presented in Section \ref{UNDEC}.  Finally, in Section \ref{CONC}
we draw our conclusions and provide some hints for future works.

\begin{table}[h]
\begin{center}
\begin{small}
\begin{tabular}{|c|l|}
\hline
&\\[-.3cm]
$\svx =\emptyset$ & $(\forall \svx' \in \svx)(x' \neq x')$ \\[0.0cm]

$\svx \subseteq \svy$ & $(\forall \svx' \in \svx)(\svx' \in \svy)$
\\[0.0cm]

$\svx=\svy \cup \svz$ & $\svy \subseteq \svx \wedge \svz \subseteq \svx \wedge (\forall \svx' \in \svx)(\svx' \in \svy \vee
\svx' \in \svz)$\\[0.0cm]

$\svx=\svy \cap \svz$ & $\svx \subseteq \svy \wedge \svx \subseteq \svz \wedge (\forall \svy' \in \svy)
(\svy' \in \svz \rightarrow \svy' \in \svx)$\\[0.0cm]

$\svx=\svy \setminus \svz$ & $\svx \subseteq \svy \wedge (\forall \svy' \in \svy)
(\svy' \in \svx \leftrightarrow \svy' \notin \svz)$\\[0.0cm]

$\svx=\{\svy\}$ & $\svy \in \svx \wedge (\forall \svx' \in \svx)(\svx'=\svy)$\\[0.0cm]

$\mvx = \emptyset$ & $(\forall [\svx, \svy] \in \mvx)(x \neq x)$ \\[0.0cm]

$\mvx \subseteq \mvy$ & $(\forall [\svx, \svy] \in \mvx)([\svx, \svy] \in \mvy)$ \\[0.0cm]

$\mvx=\mvy \cup \mvz$ & $\mvy \subseteq \mvx \wedge \mvz \subseteq \mvx \wedge (\forall [\svx,\svy] \in \mvx)([\svx,\svy] \in \mvy \vee [\svx,\svy] \in \mvz)$\\[0.0cm]

$\mvx=\mvy \cap \mvz$ & $\mvx \subseteq \mvy \wedge \mvx \subseteq \mvz \wedge (\forall [\svx,\svy] \in
\mvy)([\svx,\svy] \in \mvz \rightarrow [\svx, \svy] \in \mvx)$\\[0.0cm]

$\mvx=\mvy \setminus \mvz$ & $\mvx \subseteq \mvy \wedge
(\forall [\svx,\svy] \in \mvy)([\svx,\svy] \in \mvx \leftrightarrow [\svx,\svy] \notin \mvz)$\\[0.0cm]

$\mvx=\{[\svx, \svy]\}$ & $[\svx, \svy] \in \mvx \wedge (\forall [\svx', \svy'] \in \mvx)(\svx'=\svx \wedge \svy'=\svy)$\\[0.0cm]

$\mvx=\mvy^{-1}$ & $(\forall [\svx, \svy] \in \mvx)([\svy, \svx] \in \mvy) \wedge (\forall [\svx, \svy] \in \mvy)([\svy, \svx] \in \mvx)$\\[0.0cm]

$\mvx=\svx \times \svy$ & $(\forall \svx' \in \svx)(\forall \svy' \in \svy)([\svx',\svy'] \in \mvx) \wedge
  (\forall [\svx',\svy'] \in \mvx)(\svx' \in \svx \wedge \svy' \in \svy)$\\[0.0cm]

$\mvx=\mvy_{\svx|}$ & $\mvx \subseteq \mvy \wedge (\forall [\svx',\svy'] \in \mvy)([\svx',\svy'] \in \mvx \leftrightarrow \svx' \in \svx)$\\[0.0cm] 

$\mvx=\mvy_{|\svy}$ & $\mvx \subseteq \mvy \wedge (\forall [\svx',\svy'] \in \mvy)([\svx',\svy'] \in \mvx \leftrightarrow \svy' \in \svy)$\\[0.0cm] 

$\mvx=\mvy_{\svx|\svy}$ & $\mvx \subseteq \mvy \wedge (\forall [\svx',\svy'] \in \mvy)([\svx',\svy'] \in \mvx \leftrightarrow \svx' \in \svx \wedge \svy' \in \svy)$\\[0.0cm]

$\mvx=\identity{\svx}$ & $(\forall \svx' \in \svx)([\svx',\svx'] \in f) \wedge (\forall [\svx',\svy'] \in \mvx)(\svx'=\svy' \wedge \svx' \in \svx)$\\[0.0cm]

$\mvx=\sym(\mvy)$ & $(\forall [\svx, \svy] \in \mvx)([\svx, \svy] \in \mvy \vee [\svy, \svx] \in \mvy) \wedge 
  (\forall [\svx, \svy] \in \mvy)([\svx, \svy] \in \mvx \wedge [\svy, \svx] \in \mvx)$\\[0.0cm]

$\singlev(\mvx)$ & $(\forall [\svx,\svy] \in \mvx)(\forall [\svx',\svy'] \in \mvx)(\svx=\svx' \rightarrow \svy=\svy')$\\[0.0cm]

$\inj(\mvx)$ & $(\forall [\svx,\svy] \in \mvx)(\forall [\svx',\svy'] \in \mvx)(\svy=\svy' \rightarrow \svx=\svx')$\\[0.0cm]

$\bij(\mvx)$ & $(\forall [\svx,\svy] \in \mvx)(\forall [\svx',\svy'] \in \mvx)(\svx=\svx' \leftrightarrow \svy=\svy')$\\[0.0cm]

$\isTrans(\mvx)$ & $(\forall [\svx, \svy] \in \mvx)(\forall [\svx', \svy'] \in \mvx)(\svy=\svx' \rightarrow [\svx, \svy'] \in \mvx)$\\[0.0cm]


$\isIrrefl(\mvx)$ & $(\forall [\svx, \svy] \in \mvx)(\svx \neq \svy)$\\[0.0cm]

$\isAsym{\mvx}$ & $(\forall [\svx, \svy] \in \mvx)(\svx=\svy \vee [\svy, \svx] \notin \mvx)$\\[0.0cm]

$\mvx \circ \mvy \subseteq \mvz$ & $(\forall [\svx, \svy] \in \mvx)(\forall [\svx',\svy'] \in \mvy)(\svy=\svx' \rightarrow [\svx, \svy'] \in \mvz)$\\[0.0cm]

$\dom(\mvx) \subseteq \svx$ & $(\forall [\svx', \svy'] \in \mvx)(\svx' \in \svx)$\\[0.0cm]

$\range(\mvx) \subseteq \svy$ & $(\forall [\svx', \svy'] \in \mvx)(\svy' \in \svy)$\\[0.0cm]

$\mvx[\svx] \subseteq \svy$ & $(\forall [\svx', \svy'] \in 
\mvx)(\svx' \in \svx \rightarrow \svy' \in \svy)$\\[.1cm]
\hline
\end{tabular}
\end{small}
\end{center}
\caption{Set-theoretic constructs expressible in \Lang.}\label{SETCONS}
\end{table}

\section{Preliminaries}\label{PREL}

We briefly review basic notions from set theory and introduce
also some definitions which will be used throughout the paper.

Let $\SetVars \defAs \{\svx, \svy, \svz, \ldots\}$ and
$\MapVars \defAs \{\mvx, \mvy, \mvz, \ldots\}$ be two infinite disjoint
collections of \emph{set} and \emph{map variables},
respectively.  As we will see, map variables will be interpreted as
maps (i.e., sets of ordered pairs).  We put $\Vars \defAs \SetVars 
\cup \MapVars$.
For a formula $\varphi$, we write $\Vars(\varphi)$ for the collection
of variables occurring free (i.e., not bound by any quantifier) in
$\varphi$, and put $\SetVars(\varphi) \defAs \Vars(\varphi) \cap
\SetVars$ and $\MapVars(\varphi) \defAs \Vars(\varphi) \cap \MapVars$.

Semantics of most of the languages studied in the context of
Computable Set Theory are based on the \emph{von Neumann standard
cumulative hierarchy of sets} $\VNU$, which is the class containing
all the \emph{pure} sets (i.e., all sets whose members are recursively
based on the empty set $\emptyset$).  The von Neumann hierarchy $\VNU$
is defined as follows:
\[
\begin{array}{rcll}
  \VNU_0          & \defAs & \emptyset
\\
  \VNU_{\gamma+1} & \defAs & \Powerset(\VNU_\gamma) \,,
  & \textrm{for each ordinal $\gamma$}
\\
  \VNU_\lambda    & \defAs & \bigcup_{\mu < \lambda} \VNU_\mu
  \,,& \textrm{for each limit ordinal $\lambda$}
\\
  \VNU & \defAs & \bigcup_{\gamma \in \mathit{On}} \VNU_{\gamma} \,,
\end{array}
\]
where $\Powerset(\cdot)$ is the powerset operator and
$\mathit{On}$ denotes the class of all ordinals. 
The \emph{rank}
$\rank(u)$ of a set $u \in \VNU$ is defined as the least ordinal
$\gamma$ such that $u \in \VNU_{\gamma}$.
%
We will refer to mappings from 
$\Vars$ to $\VNU$ as \emph{assignments}.

Next we introduce some notions related to pairing functions and
ordered pairs.
Let $\pi(\cdot, \cdot)$ be a binary operation over the universe \VNU.
The \emph{Cartesian product} $\sx \myTimes \sy$ of two sets $\sx, \sy \in
\VNU$, relative to $\pi$, is defined as
$\sx \myTimes \sy \defAs \{ \pi(\sx',\sy') : \sx' \in \sx \wedge \sy' \in \sy\}$.
When it is clear from the context, for the sake of conciseness we
will omit to specify the binary operation $\pi$ and simply write 
`$\times$' in place of `$\myTimes$'.
A binary operation $\pi$ over sets in \VNU  is said to be a 
\emph{pairing function} if
\begin{itemize}
    \item[(i)] 
    $\pi(\sx,\sy)=\pi(\sx',\sy') \iff \sx=\sx' \wedge \sy=\sy'\,, \quad \text{for
    all~} u,u',v,v' \in \VNU$, and

    \item[(ii)] the Cartesian product $\sx \times \sy$ (relative to
    $\pi$) is a set of \VNU, for all $\sx,\sy \in \VNU$.
\end{itemize}
In view of the replacement axiom, condition (ii) is obvioulsy met 
when $\pi(\sx,\sy)$ is expressible by a set-theoretic term. This, for 
instance, is the case for Kuratowski's ordered pairs, defined by 
$\pi_{\mathsf{Kur}}(\sx, \sy) \defAs \{ \{\sx\}, \{\sx, \sy\}\}$,
for all $\sx, \sy \in \VNU$.
Given a pairing function $\pi$ and a set $s$, we denote with
$\pairs{}{s}$ the collection of the \emph{pairs} in $s$ (with respect
to $\pi$), namely
$ \pairs{}{s} \defAs \{ u \in s : (\exists v_1, v_2 \in \VNU)(u =
 \pi(v_1,v_2)) \}$.

A \emph{pair-aware interpretation} $\inter=(\iassignment, \ipairf)$
consists of a pairing function $\ipairf$ and an assignment
$\iassignment$ such that
$\pairs{\inter}{\iassignment \mvx}=\iassignment \mvx$ 
holds for every map variable $\mvx \in \MapVars$ (i.e., map variables
can only be assigned sets of ordered pairs, or the empty set).  For
conciseness, in the rest of the paper we will refer to \emph{pair-aware}
interpretations just as interpretations.
An interpretation $\inter=(\iassignment, \ipairf)$ associates sets to
variables and pair terms, respectively, as follows:
\begin{equation}
    \label{interpVarsPairs}
\begin{array}{rcl}
 \inter \vx & \defAs & \iassignment \vx,\\
 \inter [\vx,\vy] & \defAs& \ipairf(\inter \vx, \inter \vy),
\end{array}
\end{equation}
for all $\vx,\vy \in \Vars$. 
Let $W\subseteq\Vars$ be a finite collection of variables, and let
$M,M'$ be two assignments.  We say that $M'$ is a $W$\emph{-variant}
of $M$ if $M x = M' x$ for all $x \in \Vars \setminus W$.  For two
interpretations $\inter = (\iassignment, \ipairf)$ and $\interp =
(\iassignmentp, \ipairfp)$, we say that $\interp$ is a $W$-variant of
$\inter$ if $\iassignmentp$ is a $W$-variant of $\iassignment$ and
$\ipairfp=\ipairf$.

In the next section we introduce the precise syntax and semantics of 
the language \Lang.

\section{The language \Lang}\label{LANG}

The language \Lang consists of the denumerable infinity of variables
$\Vars = \SetVars \cup \MapVars$, the binary \emph{pairing} operator
$[\cdot,\cdot]$, the predicate symbols $\in, =$, the Boolean
connectives of propositional logic $\neg$, $\wedge$, $\vee$,
$\rightarrow$, $\leftrightarrow$, parentheses, and \emph{restricted}
quantifiers of the forms $(\forall x \in y)$, $(\forall [x,y] \in f)$,
$(\exists x \in y)$, and $(\exists [x,y] \in f)$.
\emph{Atomic} \Lang\emph{-formulae} are expressions of
the following four types
\begin{equation}
    \label{atomic}
\svx \in \svy, \quad \svx=\svy, \quad  [\svx,\svy] \in \mvx, \quad \mvx=\mvy \,,
\end{equation}
with $\svx,\svy \in \SetVars$ and $\mvx, \mvy \in \MapVars$.
\emph{Quantifier-free} \Lang\emph{-formulae} are propositional
combinations of atomic \Lang-formulae. 
\emph{Prenex} \Lang\emph{-formulae} are expressions of the 
following two forms
\begin{gather}
(\forall \svx_1 \in \svz_1) \ldots (\forall \svx_h \in \svz_h)(\forall
 [\svx_{h+1}, \svy_{h+1}] \in \mvx_{h+1})\ldots(\forall [\svx_n, \svy_n] \in
 \mvx_n)\delta\,,  \label{UNIVFORM} \\
(\exists \svx_1 \in \svz_1) \ldots (\exists \svx_h \in \svz_h)(\exists
 [\svx_{h+1}, \svy_{h+1}] \in \mvx_{h+1})\ldots(\exists [\svx_n, \svy_n] \in
 \mvx_n)\delta\,, \label{EXFORM}
\end{gather}
where $\svx_i, \svy_i, \svz_i \in \SetVars$, $\mvx_j \in \MapVars$,
and $\delta$ is a quantifier-free \Lang-formula.  We will refer to the
variables $\svz_1, \ldots, \svz_h$ as the \emph{domain variables} of
the formulae (\ref{UNIVFORM}) and (\ref{EXFORM}).  Notice that
quantifier-free \Lang-formulae can also be regarded as prenex
\Lang-formulae with an empty quantifier prefix.
A prenex \Lang-formula is said to be \emph{simple} if nesting among
quantified variables is not allowed, i.e., if no quantified variable
can occur also as a domain variable.
Finally, \Lang\emph{-formulae} are Boolean combinations of
simple-prenex \Lang-formulae.


Semantics of \Lang-formulae is given in terms of interpretations.  An
interpretation $\inter=(\iassignment, \ipairf)$ \emph{evaluates} a
\Lang-formula $\varphi$ into a truth value $\inter \varphi \in
\{\true,\false\}$ in the following recursive manner.
First of all, interpretation of quantifier-free \Lang-formulae is
carried out following the rules of propositional logic, where atomic
formulae (\ref{atomic}) are interpreted according to the standard
meaning of the predicates $\in$ and $=$ in set theory and the pair
operator $[\cdot,\cdot]$ is interpreted as in (\ref{interpVarsPairs}).
Thus, for instance, $\inter ([\svx,\svy] \in \mvx \rightarrow x \in y
) = \true$, provided that either $\ipairf(\inter \vx, \inter \vy)
\notin \inter \mvx$ or $\inter \vx \in \inter \vy$.
Then, evaluation of simple-prenex \Lang-formulae is defined
recursively as follows:
\begin{itemize}
  \item $\inter (\forall \svx \in \svz)\varphi = \true$, provided that
  $\interp \varphi = \true$, for every $\{\svx\}$-variant $\interp$ of
  $\inter$ such that $\interp \svx \in \interp \svz$;
  
  \item $\inter (\forall [\svx,\svy] \in \mvx)\varphi = \true$,
  provided that $\interp \varphi = \true$, for every
  $\{\svx,\svy\}$-variant $\interp$ of $\inter$ such that $\interp
  [\svx,\svy] \in \interp \mvx$;  
  
  \item $\inter (\exists \svx \in \svz)\varphi = \true$, provided that
  $\inter (\forall \svx \in \svz)\neg \varphi =\false$; and 
  
  \item $\inter (\exists [\svx,\svy] \in \mvx)\varphi = \true$,
  provided that $\inter (\forall [\svx,\svy] \in \mvx) \neg \varphi = 
  \false$.
\end{itemize}
Finally, evaluation of \Lang-formulae is carried out following the
rules of propositional logic.  

If an interpretation $\inter$ evaluates a \Lang-formula to $\true$ we
say that $\inter$ is a \emph{model} for $\varphi$ (and write $\inter
\models \varphi$).  
A \Lang-formula $\varphi$ is said to be \emph{satisfiable} if and only
if it admits a model.  
Two \Lang-formulae are said to be
\emph{equivalent} if they have exactly the same models.
Two \Lang-formulae $\varphi$ and $\varphi'$ are
said to be \emph{equisatisfiable} provided that $\varphi$ is
satisfiable if and only if so is $\varphi'$.  
The \emph{satisfiability
problem} (s.p., for short) for the theory \Lang is the problem of
establishing algorithmically whether any given \Lang-formula is
satisfiable or not.

By way of a simple normalization procedure based on disjunctive normal
form, the s.p.\ for \Lang-formulae can be reduced to
that for \emph{conjunctions} of simple-prenex \Lang-formulae of the
types (\ref{UNIVFORM}) and (\ref{EXFORM}).  Moreover, since any such
conjunction of the form
\[
\psi \wedge (\exists \svx_1 \in \svz_1) \ldots (\exists \svx_h \in
\svz_h)(\exists [\svx_{h+1}, \svy_{h+1}] \in \mvx_{h+1})\ldots(\exists
[\svx_n, \svy_n] \in \mvx_n)\delta
\]
is equisatisfiable with $\psi \wedge \delta_{+}' $, where
$\delta_{+}'$ is obtained from the quantifier-free formula
\[
\delta_{+} \defAs \bigwedge_{i=1}^{h} \svx_i \in \svz_i 
\wedge \bigwedge_{j=h+1}^{n} [\svx_{j}, \svy_{j}] \in \mvx_{j}
\wedge \delta
\]
by a suitable
renaming of the (quantified) variables
$x_{1},\ldots,x_{n},y_{h+1},\ldots,y_{n}$, 
it turns out that the s.p.\ for \Lang-formulae can be
reduced to the s.p.\ for \emph{conjunctions} of
simple-prenex \Lang-formulae of the type (\ref{UNIVFORM}) only, which
we call \emph{normalized \Lang-conjunctions}.

Satisfiability of normalized \Lang-conjunctions does not depend
strictly on the pairing function of the interpretation, provided that
suitable conditions hold, as proved in the following technical lemma.

\begin{lemma}\label{PFISO}
Let $\varphi$ be a normalized \Lang-conjunction, and let $\inter$ and
$\interp$ be two interpretations such that
\begin{enumerate}[label=(\alph*)]
 \item\label{PFISO_a} $\inter \vx = \interp \vx$, for all $\vx \in 
 \SetVars$,

 \item\label{PFISO_b} $\ipairf(u,v) \in \inter \mvx \iff \ipairfp(u,v)
 \in \interp \mvx$, for all $u,v \in \VNU$ and $\mvx \in \MapVars$.
\end{enumerate}
Then $\inter \models \varphi \iff \interp \models \varphi$.
\end{lemma}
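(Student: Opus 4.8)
The plan is to peel the quantifier prefix off a single conjunct and argue by induction on its length, transporting witnesses between $\inter$ and $\interp$ along a canonical correspondence between variants. Since $\varphi$ is a conjunction of simple-prenex formulae of the form (\ref{UNIVFORM}) and conjunctions are evaluated propositionally, it suffices to prove $\inter \psi = \interp \psi$ for a single conjunct $\psi = (\forall \svx_1 \in \svz_1)\cdots(\forall [\svx_n, \svy_n] \in \mvx_n)\delta$; the lemma then follows at once by propositional reasoning over the conjuncts. I would proceed by induction on the number $n$ of quantifiers, the inductive hypothesis being that \emph{any} two interpretations satisfying \ref{PFISO_a} and \ref{PFISO_b} agree on every formula of type (\ref{UNIVFORM}) with fewer than $n$ quantifiers (phrasing it for all such pairs of interpretations is essential, since the inductive step applies it to a \emph{new} pair of interpretations).

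For the base case $\psi = \delta$ is quantifier-free, hence a propositional combination of the four atom types in (\ref{atomic}), and it is enough to check that $\inter$ and $\interp$ assign each atom the same truth value. For $\svx \in \svy$ and $\svx = \svy$ this is immediate from \ref{PFISO_a}, as these depend only on set-variable values. For $[\svx,\svy] \in \mvx$, put $u = \inter \svx = \interp \svx$ and $v = \inter \svy = \interp \svy$ using \ref{PFISO_a}; then by (\ref{interpVarsPairs}) the atom holds under $\inter$ iff $\ipairf(u,v) \in \inter \mvx$, which by \ref{PFISO_b} is equivalent to $\ipairfp(u,v) \in \interp \mvx$, i.e.\ to the atom holding under $\interp$. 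The delicate atom is $\mvx = \mvy$: here I would exploit that $\inter$ and $\interp$ are pair-aware, so every member of $\inter \mvx$ and $\inter \mvy$ is of the form $\ipairf(u,v)$. Consequently $\inter \mvx = \inter \mvy$ holds iff $\ipairf(u,v) \in \inter \mvx \iff \ipairf(u,v) \in \inter \mvy$ for all $u,v \in \VNU$ (the forward direction is trivial, the backward one uses precisely that each member is such a pair). Applying \ref{PFISO_b} to both maps converts this into the analogous condition for $\interp$, giving $\inter \mvx = \inter \mvy \iff \interp \mvx = \interp \mvy$.

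For the inductive step I would peel off the outermost quantifier. Given any variant $\inter'$ of $\inter$ witnessing it, associate the variant $\interp'$ of $\interp$ obtained by copying onto $\interp$ the set values that $\inter'$ assigns to the quantified variables, leaving $\ipairfp$ untouched; the inverse association is symmetric, yielding a bijection between the relevant families of variants. The key observation is that \ref{PFISO_a} and \ref{PFISO_b} are preserved: only set variables are reassigned, so all map assignments and both pairing functions are unchanged, and the copied values agree by construction. Moreover the quantifier side-conditions match. For a set quantifier $(\forall \svx \in \svz)$, simplicity guarantees that $\svz$ is not the quantified variable, whence $\inter' \svz = \interp' \svz$ by \ref{PFISO_a} and thus $\inter' \svx \in \inter' \svz \iff \interp' \svx \in \interp' \svz$. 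For a pair quantifier $(\forall [\svx,\svy] \in \mvx)$ we have $\inter'[\svx,\svy] = \ipairf(\inter'\svx, \inter'\svy)$ and $\inter'\mvx = \inter\mvx$, so \ref{PFISO_b} turns $\inter'[\svx,\svy] \in \inter'\mvx$ into $\interp'[\svx,\svy] \in \interp'\mvx$. Applying the induction hypothesis to the pair $\inter', \interp'$ on the shorter formula and quantifying over the bijection yields $\inter \psi = \interp \psi$, and hence $\inter \models \varphi \iff \interp \models \varphi$.

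The main obstacle is not any single computation but transporting the invariant cleanly: because $\ipairf$ and $\ipairfp$ may genuinely differ, one cannot reuse ``the same'' variant, and the whole argument rests on the fact that the semantics of \Lang reaches the pairing function only through the pattern $\ipairf(\inter\svx,\inter\svy) \in \inter\mvx$, which is exactly what \ref{PFISO_b} pins down. The two places where this is least automatic are the $\mvx = \mvy$ atom, where pair-awareness is needed to reduce map equality to a quantified membership condition, and the verification that the pair-quantifier side-condition is preserved under the variant correspondence.
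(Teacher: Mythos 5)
Your proof is correct and takes essentially the same route as the paper's: reduce to a single universal simple-prenex conjunct, induct on the quantifier-prefix length with the hypothesis quantified over all pairs of interpretations satisfying (a) and (b), and transport the quantified witnesses between variants of $\inter$ and $\interp$ by copying set values, checking that (a), (b) and the quantifier side-conditions are preserved. Your explicit treatment of the atom $\mvx=\mvy$ via pair-awareness fills in a detail the paper dispatches with ``follow directly from (a) and (b)'', and it is exactly the right justification.
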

\begin{proof}
It is enough to prove that
\begin{equation}\label{PFISO1}
 \inter \models \psi \iff \interp \models \psi
\end{equation}
holds, for every (universal) simple-prenex conjunct $\psi$ occurring
in $\varphi$.  We shall proceed by induction on the length of the
quantifier prefix of $\psi$.  We begin with observing that, by
\ref{PFISO_a}, $\inter$ and $\interp$ evaluate to the same truth
values all atomic formulae of the types $\svx \in \svy$ and
$\svx=\svy$, for all $\svx, \svy \in \SetVars$.  Likewise,
\[
\inter \models \mvx=\mvy \iff \interp \models \mvx=\mvy
\qquad \text{and} \qquad
\inter \models [\svx, \svy] \in \mvx \iff \interp \models
 [\svx, \svy] \in \mvx
\]
follow directly from \ref{PFISO_a} and \ref{PFISO_b}.  Thus
(\ref{PFISO1}) follows easily when $\psi$ is quantifier-free, i.e., 
when the length of its quantifier prefix is $0$.

Next, let $\psi=(\forall \svx \in \svy)\psi_0$, for some $\svx, \svy
\in \SetVars$, where $\psi_0$ is a universally quantified
simple-prenex \Lang-formula with one less quantifier than $\psi$ and
containing no quantified occurrence of $\svy$.  We prove that
$\inter_{u}$ is a model for $\psi_0$ if and only if so is
$\interp_{u}$, for every $u \in \inter \svy = \interp \svy$, where
$\inter_{u}$ and $\interp_{u}$ denote, respectively, the $\{ \svx
\}$-variants of
$\inter$ and $\interp$ such that $\inter_{u} x =
\interp_{u} x = u$.  But, for each $u \in \inter \svy = \interp \svy$,
$\inter_{u}$ and $\interp_{u}$ satisfy conditions \ref{PFISO_a} and
\ref{PFISO_b} of the lemma, so that, by inductive hypothesis, we have
$\inter_{u} \models \psi_0 \iff \interp_{u} \models \psi_0$.  Hence
$\inter \models (\forall \svx \in \svy)\psi_0 \iff \interp \models
(\forall \svx \in \svy)\psi_0$.

The case in which $\psi=(\forall [\svx, \svy] \in \mvx)\psi_0$, with
$\svx, \svy \in \SetVars$, $\mvx \in \MapVars$, and $\psi_0$ a
universally quantified simple-prenex \Lang-formula containing no
quantified occurrence of $\svx$ and $\svy$, can be dealt with much in
the same manner, thus concluding the proof of the lemma.
\end{proof}

In the following section we show that the s.p.\ for normalized
\Lang-conjunctions is solvable. 

\section{A decision procedure for \Lang}\label{DECPROC}

We solve the s.p.\ for \Lang-formulae by reducing the s.p.\ for
normalized \Lang-conjunctions to the s.p.\ for the fragment of set
theory \Forallpizero, studied in \cite{CanLonNic2011}.
Following \cite{CanLonNic2011},
\Forallpizero-formulae are finite conjunctions of \emph{simple-prenex
\Forallpizero-formulae}, namely expressions of the form
\[
 (\forall x_1 \in \nonpairs{z_1})\ldots(\forall x_h \in \nonpairs{z_h})(\forall [x_{h+1}, y_{h+1}] \in z_{h+1})\ldots(\forall [x_n, y_n] \in z_n)\delta,
\]
where $x_i, y_i, z_i \in \SetVars$, for $i = 1,\ldots,n$, no domain
variable $z_{i}$ can occur quantified, and $\delta$ is a
quantifier-free Boolean combination of atomic formulae of the types
$x \in \nonpairs{z}$, $[x,y] \in z$, $x=y$, 
with $x,y,z \in \SetVars$.\footnote{Thus, normalization is already
built-in into \Forallpizero-formulae, and we could have called them
\emph{normalized \Forallpizero-conjunctions}.}
Intuitively, a term of the form $\nonpairs{z}$ represents the set of 
the \emph{non-pair} members of $z$.
Notice that \Forallpizero-formulae involve only set variables.

Semantics for \Forallpizero-formulae is given by extending
interpretations also to terms of the form $\nonpairs{x}$ as indicated
below:
\[
  \inter \nonpairs{\svx} \defAs \inter \svx \setminus \pairs{\inter}{\inter
  \svx} \,,
\]
where $\svx \in \SetVars$.
Evaluation of \Forallpizero-formulae is carried out much in the same
way as for \Lang-formulae.  In particular, we also put
$\inter (\forall x \in \nonpairs{y})\varphi = \true$, provided
  that $\interp \varphi = \true$, for every $\{x\}$-variant $\interp$
  of $\inter$ such that $\interp x \in \inter \nonpairs{y}$. 

We recall that satisfiability of \Forallpizero-formulae can be tested
in non-deterministic exponential time.  Additionally, the s.p.\ for
\Forallpizero-formulae with quantifier prefixes of length at most $h$,
for any fixed constant $h \geq 0$, is \textsc{NP}-complete (cf.\
\cite{CanLonNic2011}).

The s.p.\ for normalized \Lang-conjunctions can be reduced to the
s.p.\ for \Forallpizero-formulae.  To begin with, we define a
syntactic transformation $\tau(\cdot)$ on normalized
\Lang-conjunctions.  More specifically, $\tau(\varphi)$ is obtained
from a given normalized \Lang-conjunction $\varphi$ by replacing
\begin{itemize}
    \item each restricted universal quantifier $(\forall x \in y)$ in
    $\varphi$ \corr{by} the quantifier $(\forall x \in \nonpairs{y})$,
    
    \item each atomic formula $x \in y$ in $\varphi$ \corr{by} the literal
    $x \in \nonpairs{y}$, and
    
    \item each map variable $\mvx$ occurring in $\varphi$ by a fresh
    set variable $\svx_{\mvx}$, thus identifying an application $\mvx
    \mapsto \svx_\mvx$ from $\MapVars(\varphi)$ into $\SetVars$, 
    which will be referred to as \emph{\corr{map-variable renaming} for 
    $\tau(\varphi)$}.
\end{itemize}    
Thus, for instance, if
\begin{align*}
\varphi &= (\forall \svx' \in \svx)([\svx,\svx] \in \mvx)
\wedge (\forall [\svx',\svy'] \in \mvx)(\svx'=\svy' \wedge \svx' \in
\svx)
\end{align*}
then
\[
\tau(\varphi) = (\forall \svx' \in \nonpairs{\svx})([\svx,\svx] \in
\svx_{\mvx}) \wedge (\forall [\svx',\svy'] \in
\svx_{\mvx})(\svx'=\svy' \wedge \svx' \in \nonpairs{\svx})\,,
\]
where $\svx_{\mvx}$ is a set variable distinct from $\svx$, $\svx'$, 
and $\svy'$.

The following lemma
provides a useful semantic relation between universal
simple-prenex \Lang-formulae and their corresponding
\Forallpizero-formula via $\tau$.


\begin{lemma}\label{SAT0}
Let $\psi$ be a universal simple-prenex \Lang-formula and let 
$\inter = (\iassignment, \ipairf)$
be an interpretation such that
\begin{enumerate}[label=(\roman{*}), ref=(\roman{*})]
  \item\label{SAT01} $\pairs{\inter}{\{ \inter \svx : \svx \in
  \SetVars(\psi)\}} = \emptyset$ (i.e., $\inter x$ is not a pair, for
  any free variable $x$ of $\psi$), and

  \item\label{SAT02} $\pairs{\inter}{\inter \svx} = \emptyset$, for
  every domain variable $\svx$ of $\psi$.
\end{enumerate}
Then $\inter \models \psi$ 
if and only if $\inter \models \tau(\psi)$.
\end{lemma}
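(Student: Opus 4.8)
The plan is to argue by induction on the length $n$ of the quantifier prefix of $\psi$, establishing the desired biconditional for every interpretation that fulfils \ref{SAT01} and \ref{SAT02} relative to the subformula at hand. Throughout, I read the fresh set variable $\svx_{\mvx}$ introduced by $\tau$ as inheriting the value of the map variable it replaces, i.e.\ I take $\inter \svx_{\mvx} \defAs \inter \mvx$; since $\inter$ is pair-aware, $\inter \mvx$ is a set of pairs, so this is a legitimate value for a set variable and makes $\inter \tau(\psi)$ well defined. Observe that $\tau$ leaves $=$-atoms untouched, turns each atom $[\svx,\svy]\in\mvx$ into $[\svx,\svy]\in\svx_{\mvx}$ (with identical truth value, as $\inter \svx_{\mvx} = \inter \mvx$ and the pairing function is unchanged), and is nontrivial only on set membership, replacing $\svx \in \svy$ by $\svx \in \nonpairs{\svy}$ and $(\forall \svx \in \svy)$ by $(\forall \svx \in \nonpairs{\svy})$.

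\textbf{Base case ($n=0$).} Here $\psi$ is quantifier-free, so every variable of $\psi$ occurs free. The only atoms whose translation differs from the original are those of the form $\svx \in \svy$. By \ref{SAT01} the value $\inter \svx$ is not a pair; hence $\inter\svx$ cannot lie in $\pairs{\inter}{\inter \svy}$, and therefore $\inter \svx \in \inter \svy$ iff $\inter \svx \in \inter \svy \setminus \pairs{\inter}{\inter \svy} = \inter \nonpairs{\svy}$. Thus each atom keeps its truth value, and since evaluation of quantifier-free formulae is purely propositional, $\inter \models \psi \iff \inter \models \tau(\psi)$.

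\textbf{Inductive step, set quantifier.} Suppose $\psi = (\forall \svx \in \svz)\psi_0$, so that $\tau(\psi) = (\forall \svx \in \nonpairs{\svz})\tau(\psi_0)$. As $\svz$ is a domain variable, \ref{SAT02} gives $\pairs{\inter}{\inter\svz}=\emptyset$, whence $\inter\nonpairs{\svz}=\inter\svz$: the two quantifiers range over exactly the same elements. For each $u\in\inter\svz$ let $\inter_{u}$ be the $\{\svx\}$-variant of $\inter$ with $\inter_{u}\svx = u$. I must check that $\inter_{u}$ satisfies \ref{SAT01}--\ref{SAT02} for $\psi_0$: the newly freed variable $\svx$ receives $u\in\inter\svz$, which is not a pair by \ref{SAT02}, while every other free variable, and every domain variable (which, by simplicity, cannot be $\svx$), retains its $\inter$-value and hence its properties. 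The induction hypothesis then yields $\inter_{u}\models\psi_0\iff\inter_{u}\models\tau(\psi_0)$ for all $u\in\inter\svz=\inter\nonpairs{\svz}$, and quantifying over $u$ gives the claim.

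\textbf{Inductive step, pair quantifier (the main obstacle).} Suppose $\psi=(\forall[\svx,\svy]\in\mvx)\psi_0$, so $\tau(\psi)=(\forall[\svx,\svy]\in\svx_{\mvx})\tau(\psi_0)$. Because $\inter\svx_{\mvx}=\inter\mvx$ and the pairing function is shared, both prefixes range over precisely the pairs of $\inter\mvx$; it therefore suffices to prove $\interp\models\psi_0\iff\interp\models\tau(\psi_0)$ for every $\{\svx,\svy\}$-variant $\interp$ of $\inter$ with $\ipairf(\interp\svx,\interp\svy)\in\inter\mvx$, and then to conclude by the induction hypothesis. Re-establishing \ref{SAT02} for $\psi_0$ is immediate, since the domain variables are unchanged and simplicity forbids $\svx,\svy$ from being among them. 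The delicate point, which I expect to be the crux of the whole argument, is \ref{SAT01}: the variables $\svx,\svy$, now free in $\psi_0$, are bound to the two components $\interp\svx,\interp\svy$ of a pair of $\inter\mvx$, and to make the translated atoms $\svx\in\svy$ behave exactly as in the base case I must argue that these components are themselves not pairs. This is the only place where the equivalence could break, and it is here that the hypotheses \ref{SAT01}--\ref{SAT02} (together with the pair-free character of the sets that enter as membership domains in the interpretations to which the lemma is applied) have to be exploited in full, so that descending through a pair quantifier never places a pair on the left-hand side of a membership. Once this is secured, the induction hypothesis applies to each admissible $\interp$, and quantifying back over the pairs of $\inter\mvx$ completes the step and the proof.
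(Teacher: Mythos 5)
Your strategy is exactly the paper's: the same induction on the prefix length, the same base case via hypothesis (i), and the same set-quantifier step via hypothesis (ii) (the paper, like you, observes that $\inter \svy = \inter \nonpairs{\svy}$ for a domain variable $\svy$ and that members of $\inter\svy$ are non-pairs, so both hypotheses are re-established for the $\{\svx\}$-variants and the inductive hypothesis applies). But your pair-quantifier step is not a proof: you isolate the claim on which everything hinges --- that the components $\interp\svx, \interp\svy$ of a pair of $\inter\mvx$ are themselves non-pairs, so that (i) can be re-established for $\psi_0$ --- and then merely assert ``once this is secured\ldots''. Nothing in (i)--(ii) secures it: those hypotheses constrain the values of the \emph{free} set variables and of the \emph{domain} variables of $\psi$, and say nothing about which sets occur as components of the pairs collected in $\inter\mvx$. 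Your parenthetical appeal to ``the pair-free character of the sets that enter as membership domains in the interpretations to which the lemma is applied'' concedes the point: it invokes facts about how the lemma will later be used, which a self-contained proof of the lemma cannot do.

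Worse, the missing step is not just unproved but unprovable from the stated hypotheses, because the lemma fails exactly there. Take $\psi \defAs (\forall [\svx,\svy] \in \mvx)(\svx \in \svz)$, let $\ipairf = \pi_{\mathsf{Kur}}$, pick non-pairs $a \neq b$, put $p \defAs \ipairf(a,b)$, and let $\inter \mvx = \{\ipairf(p,a)\}$ (with $\inter \svx_{\mvx} = \inter \mvx$, as you stipulate) and $\inter \svz = \{p\}$. Then $\psi$ has no domain variables, and its only free set variable is $\svz$, whose value $\{p\}$ is not a Kuratowski pair (it is a singleton whose element is not a singleton), so (i) and (ii) hold; yet $\inter \models \psi$, since $p \in \{p\}$, while $\inter \not\models \tau(\psi)$, since $p$ is a pair and hence $\inter \nonpairs{\svz} = \emptyset$. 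So no argument can close your inductive step; the hypotheses would have to be strengthened (e.g., by requiring in addition that every component of every pair of $\inter\mvx$ be a non-pair, for each $\mvx \in \MapVars(\psi)$, which is what your induction actually needs). For comparison, the paper's own proof does what you did in the cases you completed and then declares that the pair-quantifier case ``can be dealt with much in the same way'' and leaves it to the reader --- which is precisely where the analogy with the set-quantifier case breaks down, since there is no analogue of (ii) playing for $\mvx$ the role that pair-freeness of $\inter\svz$ plays for a domain variable $\svz$. So you have correctly located the weak point of the paper's argument (and this defect propagates to the use of the lemma in Theorem \ref{DEC}); but as a proof attempt, your proposal leaves its crucial case open, and it cannot be closed as stated.
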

\begin{proof}
We proceed by induction on the quantifier prefix length $\ell \geq 0$
of the formula $\psi$.  To begin with, we observe that in force of
\ref{SAT01} we have $\inter \svx \in \inter \svy$ if and only if
$\inter \svx \in \inter \nonpairs{\svy}$, for any two free variables
$x$ and $y$ of $\psi$, so that, given any atomic formula $\alpha$
involving only variables in $\SetVars(\psi)$, $\inter \models \alpha$
if and only if $\inter \models \tau(\alpha)$.  Hence the lemma follows
directly from propositional logic if $\psi$ is quantifier-free, i.e.,
$\ell = 0$.

Next, let $\psi=(\forall \svx \in \svy)\psi_0$, where $\psi_0$ is a
universal simple-prenex \Lang-formula with $\ell - 1$ quantifiers,
$\svx, \svy$ are set variables occurring neither as domain nor as
bound variables in $\psi_0$.  Observe that, by
\ref{SAT02}, $\inter \svy = \inter \nonpairs{\svy}$, since $\svy$ is a
domain variable of $\psi$.  Thus
it will be enough to prove that 
\begin{equation}\label{SAT03}
\inter_{\sx} \models \psi_0 \iff \inter_{\sx} \models \tau(\psi_0) 
\end{equation}
holds for every $\{\svx\}$-variant $\inter_{\sx}$ of $\inter$ such
that $\inter_{\sx} \svx=\sx$, with $\sx \in \inter \svy$.  But
$\inter_{\sx} \svx$ can not be a pair (with respect to the pairing
function $\ipairf$), as it is a member of $\inter \svy$ and $\svy$ is
a domain variable of $\varphi$.  Thus (\ref{SAT03}) follows by
applying the inductive hypothesis to $\psi_0$ and to every
interpretation $\inter_{\sx}$ such that $\sx \in \inter \svy$.

Finally, the case in which $\psi=(\forall [\svx, \svy] \in
\mvx)\psi_0$, where $\psi_0$ is a universal simple-prenex
\Lang-formula, $\svx, \svy$ are set variables not occurring as domain
variables in $\psi_0$, and $\mvx$ is a map variable, can be dealt with
much in the same way as the previous case, and is left to the
reader.
\end{proof}

In the following theorem we use the transformation $\tau(\cdot)$ to
reduce the s.p.\ for normalized \Lang-conjunctions to the s.p.\ for
\Forallpizero-formulae.

\begin{theorem}\label{DEC}
The s.p.\ for normalized \Lang-conjunctions can be reduced in linear
time to the s.p.\ for \Forallpizero-formulae, and therefore it is in
\textsc{NExpTime}.
\end{theorem}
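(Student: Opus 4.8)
The plan is to prove that $\varphi$ and $\tau(\varphi)$ are equisatisfiable; since $\tau$ is plainly computable in linear time and the s.p.\ for \Forallpizero-formulae is in \textsc{NExpTime}, both the linear-time reduction and the complexity bound then follow at once. One point has to be settled before anything else: the atom $\mvx = \mvy$ is turned by $\tau$ into $\svx_{\mvx} = \svx_{\mvy}$, and two set variables carrying the same pairs but differing in a non-pair member would falsify $\svx_{\mvx} = \svx_{\mvy}$ while the corresponding maps coincide, so the naive translation fails to preserve satisfiability on conjuncts such as $\neg(\mvx=\mvy)$. I would therefore take the target formula to be $\tau(\varphi)$ conjoined, for each $\mvx \in \MapVars(\varphi)$, with the \Forallpizero-clause $(\forall \svx \in \nonpairs{\svx_{\mvx}})(\svx \neq \svx)$, which forces $\svx_{\mvx}$ to denote a set of pairs; this is still linear and does not affect the complexity.

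For the forward implication I would start from any model $\inter_0 = (\iassignment_0, \ipairf_0)$ of $\varphi$ and use Lemma~\ref{PFISO} to replace its pairing function by a convenient one. Let $F$ be the (set-sized) union of the values $\inter_0 \svx$ of the free set variables, of the members of the values of the domain variables, and of the components of the pairs occurring in the values $\inter_0 \mvx$ of the map variables. Choosing an ordinal $\beta$ with $\beta > \rank(s)$ for all $s \in F$ and a pairing function $\ipairf$ all of whose pairs have rank $> \beta$ (say by tagging a fixed pairing term with $\beta$), no element of $F$ is an $\ipairf$-pair. Setting $\inter \mvx \defAs \{\ipairf(u,v) : \ipairf_0(u,v) \in \inter_0 \mvx\}$ and keeping all set-variable values, the pair $(\inter_0, \inter)$ satisfies hypotheses \ref{PFISO_a} and \ref{PFISO_b} of Lemma~\ref{PFISO}, so $\inter \models \varphi$; and by construction $\inter$ meets hypotheses \ref{SAT01} and \ref{SAT02} of Lemma~\ref{SAT0} for every conjunct, the components of the pairs in each $\inter \mvx$ being moreover non-pairs (which is exactly what makes the pair-quantifier step of that lemma go through). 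Lemma~\ref{SAT0} then gives $\inter \models \tau(\psi)$ for each conjunct $\psi$, and since each $\inter \mvx$ is a set of pairs the added pure-pair clauses hold vacuously; hence $\inter$, read with $\svx_{\mvx}$ denoting $\inter \mvx$, models the whole target formula.

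For the converse I would take a model $\inter$ of the target formula. The added clauses guarantee that each $\inter \svx_{\mvx}$ is a set of pairs, so the pair-aware interpretation defined by $\mvx \mapsto \inter \svx_{\mvx}$, with the set variables left untouched, is legitimate and captures the map-equality atoms faithfully. It then suffices to exhibit such a model that is in addition \emph{pair-clean}, i.e.\ one for which \ref{SAT01} and \ref{SAT02} hold for every conjunct: for such a model the reverse direction of Lemma~\ref{SAT0} immediately produces a model of $\varphi$.

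The hard part will be precisely this last step. Unlike \Lang, the language \Forallpizero is sensitive to the pairing function, since the value of $\nonpairs{\svz}$ changes when the set of pairs changes; Lemma~\ref{PFISO} therefore has no direct analogue, and I cannot simply re-tag the pairing function to push offending pairs out of the way without flipping the truth value of atoms $\svx \in \nonpairs{\svy}$. I expect to handle this by showing that a satisfiable \Forallpizero-formula always admits a pair-clean model: whenever the value of a free or domain variable is a pair that the formula nonetheless accesses only through $\nonpairs{\cdot}$, equality, and the pair constructor applied to it — never by decomposing it into its components — it can be replaced throughout by a fresh non-pair without altering any membership, equality, or pair-membership atom. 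Carrying out this relabelling simultaneously and consistently, in particular for pair components that are themselves pairs, is the delicate point and is where I would concentrate the effort; once a pair-clean model is secured, the reverse direction of Lemma~\ref{SAT0} closes the argument and establishes the reduction, and hence the \textsc{NExpTime} bound.
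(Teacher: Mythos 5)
Your forward direction (from a model of $\varphi$ to a model of the target formula) is essentially sound and matches the paper's corresponding step: the paper likewise re-tags the pairing function (tagging with $\{\myDelta\}$ and using regularity where you use a rank bound) and then invokes Lemmas~\ref{PFISO} and~\ref{SAT0}. The fatal problem is the converse direction, and it is not merely that your ``pair-clean model'' step is delicate: the claim you intend to prove there is false. Your target formula keeps only the clauses $(\forall \svx \in \nonpairs{\svx_{\mvx}})(\svx \neq \svx)$ for map variables, whereas the paper's $\psi'$ also contains $(\forall [\svx,\svy] \in \svz)(\svx \neq \svx)$ for every free set variable $\svz$ and $\svz \in \nonpairs{\D}$ for a fresh variable $\D$. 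Omitting these breaks equisatisfiability. Consider
\[
\varphi \;\defAs\; \neg(a = b) \,\wedge\, (\forall x \in a)(x \neq x) \,\wedge\, (\forall x \in b)(x \neq x)\,,
\]
a normalized \Lang-conjunction with no map variables, so that your target formula is exactly $\tau(\varphi)$. In \Lang the two universal conjuncts force $\inter a = \inter b = \emptyset$, hence $\varphi$ is unsatisfiable. But $\tau(\varphi)$ only forbids \emph{non-pair} members of $a$ and $b$, and is satisfied by taking $\ipairfp = \pi_{\mathsf{Kur}}$, $\interp a = \{\pi_{\mathsf{Kur}}(\emptyset,\emptyset)\}$, and $\interp b = \{\pi_{\mathsf{Kur}}(\{\emptyset\},\{\emptyset\})\}$: both quantifiers hold vacuously and $\interp a \neq \interp b$. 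So your reduction maps an unsatisfiable formula to a satisfiable one. Equivalently, this satisfiable \Forallpizero-formula admits no pair-clean model at all (pair-cleanness would force $\interp a = \interp b = \emptyset$), so no amount of relabelling can rescue the argument; the entanglement is caused by extensionality acting through equality atoms, and the same phenomenon also threatens negated membership atoms and quantifiers $(\forall x \in z)$ over set variables.

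This is exactly why the paper strengthens the target formula instead of massaging the model: with the clauses $(\forall [\svx,\svy] \in \svz)(\svx \neq \svx)$ and $\svz \in \nonpairs{\D}$ included, every model of $\psi'$ automatically satisfies hypotheses \ref{SAT01} and \ref{SAT02} of Lemma~\ref{SAT0} (values of free set variables are non-pairs, and values of domain variables contain no pairs), so a model of $\varphi$ is read off directly with no surgery at all. And the extra clauses cost nothing in the other direction, since the re-tagged model produced there — by the paper's regularity argument, or by your rank argument — already makes all set-variable values pair-free. The repair to your proposal is therefore not to work harder on the relabelling lemma, which is false, but to add the missing clauses to the target formula; once you do, your proof collapses into the paper's.
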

\begin{proof}
We prove the theorem by showing that, given any normalized
\Lang-conjunction $\psi$, we can construct in linear time a
corresponding \Forallpizero-formula $\psi'$ which is
equisatisfiable with $\psi$.

So, let $\psi$ be a normalized \Lang-conjunction and let $\mvx
\mapsto \svx_\mvx$ be the map-variable renaming for $\tau(\psi)$.
We define the corresponding \Forallpizero-formula $\psi'$ as
follows:
\[
\psi' \defAs \tau(\psi) 
\wedge \bigwedge_{\svz \in \SetVars(\psi)}(\forall [\svx,\svy] \in
\svz)(\svx \neq \svx) 
\wedge \bigwedge_{\mvx \in \MapVars(\psi)} (\forall \svx \in
\nonpairs{\svx_\mvx})(\svx \neq \svx) 
\wedge \bigwedge_{\svz \in \SetVars(\psi)} (\svz \in \nonpairs{\D})
\,,
\]
where $\D$ is a fresh set variable. Plainly, the size of $\psi'$ 
is linear in the size of $\psi$.

Let us first assume that $\psi'$ admits a model $\interp =
(\iassignmentp, \ipairfp)$.  For each $\svz \in \SetVars(\psi)$ we
have $\pairs{\interp}{\interp \svz}=\emptyset$, as $\interp (\forall
[\svx,\svy] \in \svz)(\svx \neq \svx) = \true$, for $\svz \in
\SetVars(\psi)$.  Likewise, for each $\mvx \in \MapVars(\psi)$
we have $\interp \svx_\mvx = \pairs{\interp}{\interp \svx_\mvx}$, as
$\interp (\forall \svx \in \nonpairs{\svx_\mvx})(\svx \neq \svx) =
\true$, for $\mvx \in \MapVars(\psi)$.  Finally, for each $\svx \in
\SetVars(\psi)$, we have $\interp \svx \in \interp \D \setminus
\pairs{\interp}{\interp \D}$, so that $\pairs{\interp}{\{ \interp \svx
: \svx \in \SetVars(\psi)\}} = \emptyset$.
We define $\inter$ as the $\MapVars(\psi)$-variant of $\interp$
such that $\inter \mvx = \interp \svx_\mvx$, for $\mvx \in
\MapVars(\psi)$.  Plainly, $\inter \models \tau(\psi)$ so
that, by Lemma \ref{SAT0}, $\inter \models \psi$ as well.

For the converse direction, let
$\inter = (\iassignment, \ipairf)$ be a model for $\psi$.  We shall
exhibit an interpretation $\interp'$ which satisfies $\psi'$.  To
begin with, we define a new pairing function $\ipairfp$ by putting
\[
 \ipairfp(\sx, \sy) \defAs \{ \pi_{\mathsf{Kur}}(\sx, \sy), \{\myDelta\}
 \}\,,
\]
for every $\sx, \sy \in \VNU$, where $\pi_{\mathsf{Kur}}$ is the
Kuratowski's pairing function and $\myDelta \defAs \{\inter \svx :
\svx \in \SetVars(\psi)\}$.  Then we define $\iassignmentp$ as the
$\MapVars(\psi)$-variant of the assignment $\iassignment$ such that
$\iassignmentp \mvx = \{ \ipairfp(\sx, \sy) : \sx, \sy \in \VNU
\text{ and } \ipairf(\sx, \sy) \in \iassignment \mvx\}$, for each
$\mvx \in \MapVars(\psi)$.  From Lemma~\ref{PFISO}, it follows that
the interpretation $\interp = (\iassignmentp, \ipairfp)$ satisfies
$\psi$.  
Moreover, we have
\begin{equation}
    \label{PairsEq}
    \pairs{\interp}{\interp \svz}=\emptyset\,,
\end{equation}
for each $\svz \in \SetVars(\psi)$.  
Indeed, if for some $\sx, \sy \in \VNU$
and $\svz \in \SetVars(\psi)$ we had $\ipairfp(\sx, \sy) \in
\interp \svz$, then
\[
\inter z \in \myDelta \in \{\myDelta\} \in \{ \pi_{\mathsf{Kur}}(\sx, \sy), \{\myDelta\}
 \} = \ipairfp(\sx, \sy) \in \interp \svz = \inter z\,,
\]
contradicting the regularity axiom of set theory. 
Next, let $W \defAs \{\svx_{\mvx} : \mvx \in \MapVars(\psi)\} \cup
\{\D\}$ and let $\interp'$ be the $W$-variant of $\interp$, where
$\interp' \svx_{\mvx} = \interp \mvx$, for $\mvx \in
\MapVars(\psi)$, and $\interp' \D = \{\interp \svz : \svz \in
\SetVars(\psi)\}$ .  In view of (\ref{PairsEq}), it is an easy
matter to verify that 
\begin{equation}\label{secondEq}
\interp' \models \tau(\psi)\,.
\end{equation}
From (\ref{PairsEq}), we have immediately that
$\pairs{\interp'}{\interp' \svz}=\emptyset$, so that
\begin{equation}\label{firstEq}
\interp' \models \bigwedge_{\svz \in \SetVars(\psi)} (\forall
[\svx,\svy] \in \svz)(\svx \neq \svx)\,.
\end{equation}
Likewise, by reasoning much in
the same manner as for the proof of (\ref{PairsEq}), one can prove
that
\begin{equation}
    \label{thirdEq}
\interp' \models \bigwedge_{\mvx \in \MapVars(\psi)} (\forall \svx
\in \nonpairs{\svx_\mvx})(\svx \neq \svx) \wedge 
\bigwedge_{\svz \in \SetVars(\psi)} (\svz \in \nonpairs{\D})\,.
\end{equation}
From (\ref{secondEq}), (\ref{firstEq}), and (\ref{thirdEq}), it 
follows at once that $\interp' \models \psi'$, completing the 
proof that $\psi$ and $\psi'$ are equisatisfiable.

Since the s.p.\ for \Forallpizero-formulae is in \textsc{NExpTime}, 
as was shown in \cite[Section~3.1]{CanLonNic2011}, it readily follows 
that the s.p.\ for normalized \Lang-conjunctions is in 
\textsc{NExpTime} as well.
\end{proof}

\begin{corollary}
    \label{corollaryLangFormulae}
    The s.p.\ for \Lang-formulae is in \textsc{NExpTime}.
\end{corollary}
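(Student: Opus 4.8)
The plan is to lift Theorem~\ref{DEC} from normalized \Lang-conjunctions to arbitrary \Lang-formulae by running, nondeterministically, the normalization procedure outlined in Section~\ref{LANG}. Since a \Lang-formula is merely a Boolean combination of simple-prenex formulae, passing to disjunctive normal form reduces its satisfiability to that of a disjunction of conjunctions of (possibly negated) simple-prenex formulae; the one delicate point is that this disjunction may have exponentially many disjuncts, so rather than form it explicitly I would exploit nondeterminism to select a single disjunct.

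Concretely, given a \Lang-formula $\varphi$, I would regard it as a propositional combination of the simple-prenex \Lang-formulae $\psi_1,\dots,\psi_k$ occurring in it, with $k$ linear in $|\varphi|$. First, I would nondeterministically guess a truth assignment to $\psi_1,\dots,\psi_k$ and verify in polynomial time that the propositional skeleton of $\varphi$ evaluates to $\true$ under it; by the semantics of Section~\ref{LANG}, $\varphi$ is satisfiable exactly when some such guess can be realized by a single interpretation, that is, when the conjunction consisting of each $\psi_i$ assigned $\true$ together with the negation of each $\psi_i$ assigned $\false$ is satisfiable. Next, I would drive the negations inward: negating a simple-prenex formula of type (\ref{UNIVFORM}) yields one of type (\ref{EXFORM}) and vice versa, so the resulting conjunction is a conjunction of simple-prenex \Lang-formulae of the types (\ref{UNIVFORM}) and (\ref{EXFORM}). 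Finally, I would apply the existential-elimination renaming of Section~\ref{LANG}, replacing each existential prefix by the quantifier-free conjunction of its atoms with the bound variables renamed to fresh witnesses; this yields an equisatisfiable normalized \Lang-conjunction $\psi'$ of size linear in that of the selected disjunct, hence polynomial in $|\varphi|$, to which Theorem~\ref{DEC} applies directly.

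The substance of the argument is a complexity accounting rather than any new mathematics, and that is where I expect the only real care to be needed. The worry is that the disjunctive normal form has exponentially many disjuncts; the resolution is that the nondeterministic guess commits to just one disjunct, which has size polynomial in $|\varphi|$, so the induced normalized \Lang-conjunction $\psi'$ is polynomial as well and its satisfiability is decidable in \textsc{NExpTime} by Theorem~\ref{DEC}. Since the guessing and rewriting phases run in nondeterministic polynomial time and feed a single \textsc{NExpTime} computation, the composite procedure stays within \textsc{NExpTime}, and it accepts along some branch precisely when $\varphi$ is satisfiable. The remaining obligations---that negation interchanges the two prenex forms, and that the witness-renaming step preserves satisfiability---are immediate from the definitions and from the equisatisfiability observation already recorded in Section~\ref{LANG}, so no separate proof is required.
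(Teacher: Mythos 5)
Your proposal is correct and follows essentially the same route as the paper's proof: nondeterministically guess a truth assignment to the simple-prenex subformulae that makes the propositional skeleton true, convert the negated (universal) conjuncts into existentials and eliminate all existential prefixes by the fresh-variable renaming of Section~\ref{LANG}, then invoke Theorem~\ref{DEC} on the resulting polynomial-size normalized \Lang-conjunction. The only cosmetic difference is that the paper first rewrites all existential prenex formulae as negated universal ones before guessing, whereas you keep both forms and interchange them when driving negations inward; the complexity accounting (one guessed disjunct instead of the full DNF) is identical.
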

\begin{proof}
Let $\varphi$ be a satisfiable \Lang-formula.  We may assume without
loss of generality that all existential simple-prenex \Lang-formulae
of the form (\ref{EXFORM}) have already been rewritten in terms of
equivalent universal simple-prenex \Lang-formulae of the form
(\ref{UNIVFORM}), so that $\varphi$ is a propositional combination of
universal simple-prenex \Lang-formulae.  In addition, by suitably
renaming variables, we may assume that all quantified variables in
$\varphi$ are pairwise distinct and that they are also distinct from
free variables.

Let $\Sigma_{\varphi}=\{\psi_1, \ldots, \psi_n\}$ be the collection of
the universal simple-prenex \Lang-formulae occurring in $\varphi$.  By
traversing the syntax tree of $\varphi$, one can find in linear time
the propositional skeleton $P_{\varphi}$ of $\varphi$ and a
substitution $\sigma$ from the propositional variables
$\mathsf{p}_{1}, \ldots, \mathsf{p}_{n}$ of $P_{\varphi}$ into
$\Sigma_{\varphi}$, such that $P_{\varphi} \sigma = \varphi$, where
$P_{\varphi} \sigma$ is the result of substituting each propositional
variable $\mathsf{p}_{i}$ in $P_{\varphi}$ by the universal
simple-prenex \Lang-formula $\sigma(\mathsf{p}_{i})$.  Then to check
the satisfiability of $\varphi$ one can perform the following
non-deterministic procedure:
\begin{itemize}
    \item guess a Boolean valuation $\nu$ of the propositional 
    variables $\mathsf{p}_{1}, \ldots, \mathsf{p}_{n}$ of 
    $P_{\varphi}$ such that $\nu(P_{\varphi}) = \true$;
    
    \item form the \Lang-conjunction
    \begin{equation}\label{FORALLPIZERONU}
    \bigwedge_{\nu(\mathsf{p}_{i}) = \true}
    \sigma(\mathsf{p}_{i}) 
    \,\,\, \wedge \,\,\,
    \bigwedge_{\nu(\mathsf{p}_{i}) = \false}
    \neg\sigma(\mathsf{p}_{i}) \,;
    \end{equation}
    
    \item transform each conjunct 
    \[
    \neg (\forall \svx_1 \in \svz_1) \ldots (\forall \svx_h \in
    \svz_h)(\forall [\svx_{h+1}, \svy_{h+1}] \in
    \mvx_{h+1})\ldots(\forall [\svx_n, \svy_n] \in \mvx_n)\delta
    \]
    of the form $\neg\sigma(\mathsf{p}_{i})$ in
    (\ref{FORALLPIZERONU}), where $\nu(\mathsf{p}_{i}) = \false$, into
    the equisatisfiable formula
    \[
    \bigwedge_{i=1}^{h} \svx_i \in \svz_i 
    \wedge \bigwedge_{j=h+1}^{n} [\svx_j, \svy_j] \in \mvx_j
    \wedge \neg \delta \,.
    \]
    Let $\varphi'$ be the
    normalized \Lang-conjunction so obtained. Plainly, $\varphi' 
    \rightarrow \varphi$ is satisfied by any interpretation.
        
    
    \item Check that $\varphi'$ is satisfiable by a 
    \textsc{NExpTime} procedure for normalized \Lang-conjunctions 
    (cf.\ Theorem \ref{DEC}).
\end{itemize}
Since $\varphi'$ can be constructed in non-deterministic linear time, 
the corollary follows.
\end{proof}

Next we consider \LangBounded{h}-formulae, namely \Lang-formulae whose
simple-prenex subformulae have \corr{quantifier-prefix lengths} bounded by
the constant $h\geq 0$.
By reasoning much as in the proofs of Theorem~\ref{DEC} and Corollary
\ref{corollaryLangFormulae}, it is immediate to check that the s.p.\
for \LangBounded{h}-formulae can be reduced in non-deterministic
linear time to the s.p.\ of $\ForallpizeroBounded{h}$-formulae, and
thus, by \cite[Corollary~4]{CanLonNic2011}, it can be decided in
non-deterministic polynomial time.
On the other hand, it is an easy matter to show that the s.p.\ for
\LangBounded{h}-formulae is \textsc{NP}-hard.  Indeed, given a
propositional formula $Q$, consider the \LangBounded{0}-formula
$\psi_{Q}$, obtained from $Q$ by replacing each propositional variable
$\mathsf{p}$ in $Q$ with the atomic \Lang-formula $x_{\mathsf{p}} \in
X$, where $X$ and the $x_{\mathsf{p}}$'s are distinct set variables.
Plainly, $Q$ is propositionally satisfiable if and only if the
\Lang-formula $\psi_{Q}$ is satisfiable.
The following lemma summarizes the above considerations.

\begin{lemma}\label{NP}
For any integer constant $h \geq 0$, the s.p.\ for
\LangBounded{h}-formulae is \textsc{NP}-complete.  \qed
\end{lemma}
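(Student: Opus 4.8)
The plan is to prove membership in \textsc{NP} and \textsc{NP}-hardness separately; together they give \textsc{NP}-completeness for every fixed $h \geq 0$.

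For membership, I would replay the reductions of Theorem~\ref{DEC} and Corollary~\ref{corollaryLangFormulae} while tracking the quantifier-prefix length. Given a \LangBounded{h}-formula $\varphi$, I would first guess a satisfying Boolean valuation of its propositional skeleton and form the corresponding \Lang-conjunction, negating the conjuncts assigned false and stripping their quantifiers as in the normalization step; all of this runs in non-deterministic linear time. I would then apply the transformation $\tau(\cdot)$ to obtain an equisatisfiable \Forallpizero-formula. The decisive point is that none of these steps lengthens any quantifier prefix: guessing and quantifier-stripping only remove quantifiers, and $\tau$ rewrites quantifiers and atoms one-for-one without creating new nesting. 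Hence $\varphi$ is mapped to a $\ForallpizeroBounded{h}$-formula, whose satisfiability lies in \textsc{NP} by \cite[Corollary~4]{CanLonNic2011}. Composing a non-deterministic linear-time reduction with this \textsc{NP} test keeps the whole procedure within \textsc{NP}.

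For hardness, I would reduce from propositional satisfiability. Given a propositional formula $Q$, I form the \LangBounded{0}-formula $\psi_{Q}$ by uniformly replacing each propositional variable $\mathsf{p}$ with the atom $x_{\mathsf{p}} \in X$, where $X$ and the $x_{\mathsf{p}}$ are chosen pairwise distinct. Since $\psi_{Q}$ is quantifier-free it is a \LangBounded{0}-formula, and the map $Q \mapsto \psi_{Q}$ is computable in linear time. I would verify correctness in both directions: from a propositional model of $Q$ one builds a model of $\psi_{Q}$ by interpreting the $x_{\mathsf{p}}$ as distinct sets and letting $X$ collect exactly those $x_{\mathsf{p}}$ whose $\mathsf{p}$ is true, while any model of $\psi_{Q}$ induces a propositional assignment through the truth value of $x_{\mathsf{p}} \in X$. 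This shows that $Q$ is propositionally satisfiable if and only if $\psi_{Q}$ is satisfiable, yielding \textsc{NP}-hardness already at $h = 0$, and hence for all $h \geq 0$.

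The hard part will not be the hardness direction, which is routine, but the bookkeeping in the membership direction: I must confirm that normalization, skeleton guessing, and the syntactic transformation $\tau$ never concatenate or otherwise enlarge quantifier prefixes, so that the bound $h$ is faithfully transported to the target $\ForallpizeroBounded{h}$-formula. Once this invariant is secured, the cited \textsc{NP} upper bound for bounded \Forallpizero-formulae applies directly and the proof closes.
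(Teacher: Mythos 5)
Your proposal is correct and takes essentially the same route as the paper: membership by tracing the quantifier-prefix bound through the reductions of Theorem~\ref{DEC} and Corollary~\ref{corollaryLangFormulae} and then invoking the \textsc{NP} result for $\ForallpizeroBounded{h}$-formulae from \cite[Corollary~4]{CanLonNic2011}, and hardness by the very same substitution of each propositional variable $\mathsf{p}$ by the atom $x_{\mathsf{p}} \in X$, which already works at $h = 0$. You merely make explicit some bookkeeping (the two directions of the hardness reduction and the preservation of the prefix bound) that the paper leaves implicit.
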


It is noticeable that, despite of the large collection of
set-theoretic constructs which are expressible by \Lang-formulae (see
Table \ref{SETCONS}), some very common map-related operators like
domain, range, and map image can not be expressed by \Lang-formulae in
full generality, but only in restricted contexts.  In the next section
we prove that dropping any of such restrictions triggers
undecidability.

\section{Some undecidable extensions of \Lang}\label{UNDEC}

In this section we prove the undecidability of any extension of \Lang
which allows one to express literals of the form $x \subseteq
\dom(f)$.  As we will see, analogous undecidability results hold also
for similar extensions of \Lang in the case of other map related
constructs such as range, map image, and map composition.
Our proof will be carried out via a reduction of the \emph{Domino
Problem}, a well-known undecidable problem studied in \cite{Ber1966}
\corr{(see also \cite{BorGraGur1997})},
which asks for a tiling of the quadrant $\nat \times \nat$ subject to
a finite set of constraints.

\begin{definition}[Domino problem]\label{DOMINO}
A \emph{domino system} is a triple $\dominoSys=(D,H,V)$, where
$D=\{d_1, \ldots, d_\ell\}$ is a finite nonempty set of \emph{domino
types}, and $H$ and $V$, respectively the \emph{horizontal} and
\emph{vertical compatibility conditions}, are two functions
which associate to each domino type $d \in D$ a subset of $D$,
respectively $H(d)$ and $V(d)$.

A \emph{tiling} $t$ for a domino system $\dominoSys=(D,H,V)$
is any mapping which associates a domino type in $D$ to each 
ordered pair of natural numbers in $\nat \times \nat$.
A tiling $t$ is said to be \emph{compatible} 
if and only if  $t [m+1,n] \in H(t[m,n])$ and 
$t [m, n+1] \in V(t[m,n])$ for all $n,m \in \nat$.
The \emph{domino problem} consists in 
determining whether a domino system admits a compatible tiling.
\qed
\end{definition}

In order to reformulate the domino problem in set-theoretic terms, we
make use of the following set-theoretic variant of Peano systems (see, 
for instance, \cite{Mos2005}).
%

\begin{definition}[Peano systems]\label{PEANO}
Let $\pi$ be a pairing-function and let $\peanN, \peanZ, \peanS$ be
three sets in the von Neumann hierarchy of sets.  The tuple
$\peanSys=(\peanN, \peanZ, \peanS, \pi)$ is said to be a Peano system
if it satisfies the following conditions:

\begin{enumerate}[label=\textbf{(P\arabic*)}]
 \item\label{P1} $\peanN$ is a set to which $\peanZ$ belongs;

 \item\label{P2} $\peanS \subseteq \peanN \times_{\pi} \peanN$ is a
 total function over $\peanN$, i.e., a single-valued map with domain
 $\peanN$;
 
 \item\label{P3} $\peanS$ is injective;
 
 \item\label{P4} $\peanZ$ is not in the range of $\peanS$;
 
 \item\label{P5} for each $X \subseteq \peanN$ the following holds:
\[
 (\peanZ \in X \wedge (\forall n \in \peanN)(n \in X \longrightarrow \peanS n \in X)) \longrightarrow X = \peanN.
\]
\end{enumerate}
\end{definition}
\noindent The first Peano system was devised by G.\ Peano himself.  It
can be \corr{characterized} as $\peanSys_{0}=(\peanN_{0}, \peanS_{0},
\emptyset, \pi_{\mathsf{Kur}})$, where $\peanN_{0}$ is the minimal set
containing the empty set $\emptyset$ and satisfying $(\forall \sx \in
\peanN_{0})(\{\sx\} \in \peanN_{0})$, and $\peanS_{0}$ is the relation
over $\peanN_{0}$ such that $\pi_{\mathsf{Kur}}(\sx, \sy) \in
\peanS_{0}$ if and only if $\sx \in \sy$.\footnote{In the original
definition the pairing function was not specified.}

The domino problem can be easily reformulated in pure set-theoretic
terms.  To this purpose, we observe that any tiling $t$ for a domino
system induces a partitioning of the integer plane $\nat \times \nat$,
as it associates exactly one domino type to each pair $\langle
n,m\rangle \in \nat \times \nat$.  Hence, given a domino system
$\dominoSys=(\{d_1, \ldots, d_\ell\},H,V)$, the domino problem for
$\dominoSys$ can be expressed in set-theoretic terms as the problem of
deciding whether there exists a partitioning $\Partition=(\partn_1,
\ldots, \partn_\ell)$ of $\peanN \times_{\pi} \peanN$, for some fixed
Peano system $\peanSys=(\peanN, \peanZ, \peanS, \pi)$, such that for
all $u,v,u',v' \in \peanN$, and for all $1 \leq i,j \leq \ell$ such
that $\pi(u,v) \in \partn_i$ and $\pi(u',v') \in \partn_j$,
\begin{enumerate}[label=\textbf{(D\arabic*)}]
\item\label{D1} if $\pi(u,u') \in \peanS$ (i.e., $u'$ is the successor
of $u$) and $v=v'$ then $d_j \in H(d_i)$, and

\item\label{D2} if $\pi(v,v') \in \peanS$ (i.e., $v'$ is the successor
of $v$) and $u=u'$ then $d_j \in V(d_i)$.
\end{enumerate}

Notice that from the properties of Peano systems it follows that if a
domino system $\dominoSys$ admits a compatible tiling $t$ then we can
construct a partitioning of the integer plane which satisfies \ref{D1}
and \ref{D2} however the Peano system is chosen.


All instances of the domino problem can be formalized with normalized
\Lang-conjunctions extended with two positive literals of the form $x
\subseteq \dom(\mvx)$, with $x \in \SetVars$ and $\mvx \in \MapVars$,
where the obvious semantics for the operator $\dom(\cdot)$ is 
$
\inter (\dom(\mvx)) \defAs  \{u \in \VNU : [u,v] \in \inter \mvx,
\text{ for some } v \in \VNU\}\,,
$
for any interpretation $\inter$.  In view of the undecidability of the
domino problem, this yields the undecidability of the s.p.\ for the
class $\Langdom$ of normalized \Lang-conjunctions extended with two
positive literals of the form $x \subseteq \dom(\mvx)$, proved in the
following theorem.

%
%
%
%
%

\begin{theorem}\label{PEANUNDEC}
The s.p.\ for $\Langdom$, namely the class of normalized
\Lang-conjunctions extended with two positive literals of the form $x
\subseteq \dom(\mvx)$, is undecidable.
\end{theorem}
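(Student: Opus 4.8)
The plan is to prove undecidability by reducing the Domino Problem to the satisfiability problem for $\Langdom$. Since the excerpt has already established that every instance of the domino problem can be expressed as the problem of finding a partition $\Partition = (\partn_1, \ldots, \partn_\ell)$ of $\peanN \times_\pi \peanN$ (for some Peano system) satisfying conditions \ref{D1} and \ref{D2}, the core task is to encode all the relevant set-theoretic requirements into an $\Langdom$-formula $\varphi_\dominoSys$ that is satisfiable if and only if the given domino system $\dominoSys$ admits a compatible tiling. First I would write a conjunction $\isPeanoSys$ of $\Langdom$-literals expressing that a tuple of variables $(\mN, \mZ, \mS)$, together with the ambient pairing function, forms a Peano system. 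Here lies the first delicate point: the induction axiom \ref{P5} quantifies over \emph{all} subsets $X \subseteq \peanN$, which is genuinely second-order and cannot be captured by any first-order (let alone quantifier-restricted) formula. The standard resolution, which I would adopt, is that we do \emph{not} need to force $\mN$ to be the \emph{genuine} natural numbers; it suffices that any model contains at least an isomorphic copy of the standard Peano structure so that the quadrant $\nat\times\nat$ embeds into it. Concretely, I would only assert \ref{P1}--\ref{P4} — that $\mZ \in \mN$, that $\mS$ is a single-valued injective map with $\mZ$ outside its range, and (crucially) that $\mS$ is \emph{total} on $\mN$, i.e.\ $\mN \subseteq \dom(\mS)$. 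This last inclusion is exactly where the extra literal $x \subseteq \dom(\mvx)$ is indispensable, and it is used to guarantee that from $\mZ$ one can follow successors indefinitely, so that the substructure generated by $\mZ$ under $\mS$ contains an infinite chain of distinct elements $\mZ, \mS\mZ, \mS^2\mZ, \ldots$ isomorphic to $\nat$.

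Next I would express the partition and compatibility constraints. Using the constructs from Table~\ref{SETCONS} (Cartesian product, single-valuedness, subset, union, intersection, etc.), I would write an $\Langdom$-conjunction $\isPartition$ stating that $\partn_1, \ldots, \partn_\ell$ are pairwise disjoint maps whose union is $\mN \times_\pi \mN$, thereby assigning exactly one domino type to each pair in the quadrant. The compatibility conditions \ref{D1} and \ref{D2} translate into restricted-quantifier conjuncts of the form $(\forall [\svx,\svy] \in \partn_i)(\forall [\svx',\svy'] \in \partn_j)(\ldots)$: for each pair of types $d_i, d_j$ with $d_j \notin H(d_i)$ we forbid the horizontal-neighbour configuration $\pi(\svx,\svx') \in \mS \wedge \svy = \svy'$, and symmetrically for the vertical condition using $V$. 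Since $\ell$ is fixed with the domino system, this is a finite conjunction, and each conjunct is a legitimate simple-prenex $\Lang$-formula because the domain variables $\partn_i$ are map variables and no quantified-variable nesting is introduced. The only literals outside plain $\Lang$ are the (at most two) occurrences of $x \subseteq \dom(f)$ inside $\isPeanoSys$, so the whole formula $\varphi_\dominoSys \defAs \isPeanoSys \wedge \isPartition$ lies in $\Langdom$ as required.

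It then remains to prove the two directions of the reduction. For the forward direction, if $\dominoSys$ admits a compatible tiling $t$, I would take the standard Peano system $\peanSys_0$ and interpret each $\partn_i$ as $\{\pi(u,v) : t[u,v] = d_i\}$; the remark following conditions \ref{D1}--\ref{D2} in the excerpt guarantees this yields a model of $\varphi_\dominoSys$ regardless of the chosen Peano system. For the converse, given any model $\inter$ of $\varphi_\dominoSys$, the satisfaction of the weakened Peano conditions \ref{P1}--\ref{P4} together with totality of $\mS$ yields, as sketched above, an injective successor map whose iterates from $\mZ$ produce a copy of $\nat$; I would then read off a well-defined total function $t \colon \nat \times \nat \to D$ by setting $t[m,n] = d_i$ iff $\pi(\mS^m\mZ, \mS^n\mZ) \in \inter \partn_i$, with $\isPartition$ ensuring this is single-valued and total, and the compatibility conjuncts ensuring $t$ is compatible.

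The main obstacle, and the step I expect to require the most care, is precisely the tension around the induction axiom \ref{P5}: I must argue that replacing full second-order induction by the first-order-expressible requirements \ref{P1}--\ref{P4} plus $\mN \subseteq \dom(\mS)$ does not break the reduction. The key observation is that soundness of the reduction only needs $\nat$ to \emph{embed} into any model, not that models be \emph{exactly} $\nat$; the successor chain from $\mZ$ is guaranteed infinite and non-repeating by injectivity of $\mS$ together with $\mZ \notin \range(\mS)$ (which rules out the chain cycling back), and totality (via $\dom$) guarantees the chain never gets stuck. I would emphasize that this is the sole place where the extra domino literal $x \subseteq \dom(f)$ is genuinely needed, and that its presence is exactly what separates decidability from undecidability; the analogous reductions for $\range$, $f[\svx]$, and $f \circ g \subseteq h$ alluded to after the theorem would follow by straightforward adaptation of the same successor-chain encoding.
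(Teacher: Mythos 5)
Your proposal is correct and your reduction works, but it diverges from the paper's proof at precisely the step you single out as delicate: the treatment of the induction axiom \ref{P5}. The paper does \emph{not} weaken the Peano axioms. It enforces all of \ref{P1}--\ref{P5}, capturing the apparently second-order axiom \ref{P5} by a set-theoretic trick: $\isPeanoSys(\mN,\mZ,\mS)$ contains the conjunct $(\forall [\svx,\svy] \in \mS)(\svx \in \svy)$, which forces every successor step to strictly increase rank, together with $\range(\mS)=\mN\setminus\{\mZ\}$, which guarantees that every element other than $\mZ$ has a predecessor; a minimal-rank argument, relying on the regularity of the universe $\VNU$, then shows that \emph{every} model of $\isPeanoSys(\mN,\mZ,\mS)$ is a genuine Peano system. (So your assertion that \ref{P5} ``cannot be captured by any first-order formula'' is true over arbitrary structures but false over the well-founded universe $\VNU$: the paper's formula does capture it.) This design is also why the paper uses \emph{two} literals of the form $\svx \subseteq \dom(\mvx)$: one for $\mN \subseteq \dom(\mS)$, and one, through the \Lang-expressible inverse $\mS^{-1}$, for $\mN \setminus \{\mZ\} \subseteq \range(\mS)$. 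Your route instead drops \ref{P5} and the surjectivity requirement, keeps only \ref{P1}--\ref{P4} with totality $\mN \subseteq \dom(\mS)$, and extracts from an arbitrary model the chain $\mZ, \mS\mZ, \mS^{2}\mZ, \ldots$, which is infinite and repetition-free by injectivity of $\mS$ together with $\mZ \notin \range(\mS)$; since the partition and compatibility constraints are purely universal, they hold in particular on this embedded copy of $\nat$, and a compatible tiling can be read off. This is the classical tiling-reduction technique (cf.\ \cite{BorGraGur1997}); it avoids any appeal to regularity and needs only \emph{one} literal $\svx \subseteq \dom(\mvx)$, so it actually yields a slightly sharper statement, the two-literal version following by duplicating the literal (your direct encoding of compatibility by forbidden configurations, instead of the paper's $\horcom_i$ and $\vercom_i$ via map composition, is a cosmetic difference). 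What the paper's heavier machinery buys is the intermediate fact, of independent interest, that $\isPeanoSys$ exactly characterizes Peano systems over $\VNU$, which lets its proof pass literally through the set-theoretic reformulation \ref{D1}--\ref{D2} of the domino problem, with the partition living on a genuine copy of $\nat \times \nat$ rather than on a sub-chain of a possibly larger model.
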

\begin{proof}
Let $\dominoSys=(D,H,V)$, with $D=\{d_1, \ldots, d_\ell\}$, be a
domino system.  We will show how to construct in polynomial time a
formula $\varphi_{\dominoSys}$ of $\Langdom$ which is satisfiable if and only
if there exists a partitioning of the integer plane which satisfies
conditions \ref{D1} and \ref{D2}, so that the undecidability of the
s.p.\ for $\Langdom$ will follow directly from the undecidability
of the domino problem.

Let $\mN$, $\mZ$ be two distinct set variables, and let $\mS$ be a map
variable.  In addition, let $\mQ_{1}, \ldots, \mQ_{\ell}$ be pairwise
distinct map variables, which are also distinct from $\mS$.  These are
intended to represent the blocks of the partition of the integer plane
induced by a tiling.
To enhance the readability of the formula $\varphi_{\dominoSys}$ we are about to
construct, we introduce some abbreviations which will also make use 
of some map constructs defined in Table~\ref{SETCONS}. 
To begin with, we put
\[
  \isPartition(\mQ_1, \ldots, \mQ_\ell; \mN \times \mN) \defAs
  \mN \times \mN \subseteq \mQ_1 \cup \ldots \cup \mQ_\ell \,\,\, \wedge \,\,
  \bigwedge_{i\neq j} \left(\mQ_i \cap \mQ_j =
  \emptyset\right).
\]
Plainly, for every interpretation \inter, we have $\inter \models
\isPartition(\mQ_1, \ldots, \mQ_\ell; \mN \times \mN)$ if and only if
$(\inter \mQ_1, \ldots, \inter \mQ_\ell)$ partitions $\inter
(\mN \times \mN)$.
%
Next we define the formulae $\horcom_i$ and $\vercom_i$, for $i = 
1,\ldots,\ell$, which will encode respectively the horizontal and the
vertical compatibility constraints:
\[
  \horcom_i  \defAs  S^{-1} \circ \mQ_i \subseteq \bigcup_{d_{j} 
  \in H(d_{i})} \mQ_{j}\,, \qquad \qquad
  \vercom_i  \defAs  \mQ_i \circ S \subseteq \bigcup_{d_{j} 
  \in V(d_{i})} \mQ_{j}\,.
\]
%
%
Finally, we denote with $\isPeanoSys(N,Z,S)$ the following formula:
\[
  \isPeanoSys(\mN,\mZ,\mS) \defAs \mZ \in \mN \wedge \bij(\mS)
  \wedge \dom(\mS)=\mN \wedge \range(\mS) = (\mN \setminus \{\mZ\})
  \wedge (\forall [\svx, \svy] \in \mS)(\svx \in \svy).
\]
Notice that $\range(\mS) = (\mN \setminus \{\mZ\})$ is equivalent to
$\dom(\mS^{-1}) = (\mN \setminus \{\mZ\})$.  In addition, a literal of
the form $x = \dom(f)$ can obviously be expressed by the conjunction
$
(\forall [\svx', \svy'] \in \mvx)(\svx' \in \svx) \,\wedge \, x \subseteq
\dom(f)\,.
$

Next we show that the formula $\isPeanoSys(\mN,\mZ,\mS)$ is
satisfiable and correctly characterizes Peano systems, in the sense
that if $\inter \models \isPeanoSys(\mN,\mZ,\mS)$ for an interpretation
$\inter$, then $(\inter \mN, \inter \mZ, \inter \mS, \ipairf)$ is a
Peano system.
Given any interpretation $\inter$ such that $\inter \mN=\peanN_{0}$,
$\inter \mS=\peanS_{0}$, $\inter \mZ = \emptyset$, and $\ipairf =
\pi_{\mathsf{Kur}}$, $\inter \models \isPeanoSys(\mN,\mZ,\mS)$ follows
from the very definition of $\peanSys_{0}$, so that
$\isPeanoSys(\mN,\mZ,\mS)$ is satisfiable.
In addition, if $\inter \models \isPeanoSys(\mN,\mZ,\mS)$ for an
interpretation $\inter$, it can easily be proved that $(\inter \mN,
\inter \mZ, \inter \mS, \ipairf)$ is a Peano system.  Indeed \ref{P1},
\ref{P2}, \ref{P3}, and \ref{P4} follow readily from the first four
conjuncts of $\isPeanoSys(\mN,\mZ,\mS)$.  
Concerning \ref{P5}, we proceed by contradiction.
Thus, let us assume that there exists a proper subset $X$ of $\inter
\mN$ such that the following holds
\begin{equation}\label{UNDECEQ1}
 \inter \mZ \in X \wedge (\forall n,n' \in \inter \mN)\big((n \in X
 \wedge \ipairf(n,n') \in \inter \mS) \longrightarrow n' \in X\big)
\end{equation}
and let $\sx$ be a set in $\inter \mN \setminus X$ with minimal rank.
We must have $\sx\neq\inter \mZ$, in force of the first conjunct of
(\ref{UNDECEQ1}), and thus $\sx \in \range(\inter \mS)$ must hold, as
we assumed that $\inter$ correctly models the conjunct $\range(\mS) =
(\mN \setminus \{\mZ\})$ of the formula $\isPeanoSys(\mN,\mZ,\mS)$.
Hence, there must exist a set $\sy$ such that $\pi(\sy, \sx) \in
\inter \mS$.  Since $\inter \models (\forall [\svx, \svy] \in \mS)(\svx
\in \svy)$, $\sy$ must have rank strictly less than $\sx$, so that
$\sy \in X$ must hold, as by assumption $\sx$ has minimal rank in
$\inter \mN \setminus X$.  But (\ref{UNDECEQ1}) would yield $\sx \in X$,
which contradicts our initial assumption $\sx \in \inter X \setminus
N$.


We are now ready to define the formula $\varphi_{\dominoSys}$ of
$\Langdom$ intended to express that the domino system
$\dominoSys=(D,H,V)$ admits a compatible tiling. This is:
\[
 \varphi_{\dominoSys} \defAs \isPeanoSys(\mN,\mZ,\mS) \wedge
 \isPartition(\mQ_1, \ldots, \mQ_\ell; \mN \times \mN) \wedge
 \bigwedge_{i=1}^{\ell} \horcom_i
 \wedge
 \bigwedge_{i=1}^{\ell} \vercom_i\,.
\]
Observe that $\varphi_{\dominoSys}$ can be expanded so as to involve 
only two literals of the form $\svx \subseteq \dom(\mvx)$.

We show next that $\varphi_{\dominoSys}$ is satisfiable if and only if
the domino system $\dominoSys$ admits a compatible tiling.
Let us first assume that $\varphi_{\dominoSys}$ is satisfiable, and
let $\inter$ be a model for $\varphi_{\dominoSys}$.  Plainly, $(\inter
\mN, \inter \mZ, \inter \mS, \ipairf)$ is a Peano system, as
$\isPeanoSys(\mN,\mZ,\mS)$ is a conjunct of $\varphi_{\dominoSys}$.
In addition, $(\inter \mQ_1, \ldots, \inter \mQ_\ell)$ partitions
$\inter \mN \times \inter \mN$, since $\inter \models
\isPartition(\mQ_1, \ldots, \mQ_\ell; \mN \times \mN)$.  It remains to
prove that the partition $(\inter \mQ_1, \ldots, \inter \mQ_\ell)$ is
induced by a compatible tiling of the domino system $\dominoSys$,
i.e., that properties \ref{D1} and \ref{D2} hold.  Thus let $u,u',v
\in \inter \mN$ such that $\ipairf(u,v) \in \inter \mQ_i$,
$\ipairf(u',v) \in \inter \mQ_j$, and $\ipairf(u,u') \in \inter S$,
for some $1 \leq i,j \leq \ell$.  Plainly $\ipairf(u',v) \in \inter
(S^{-1} \circ \mQ_i)$, so that from $\inter \models \horcom_i$ it
follows $d_j \in H(d_i)$, proving \ref{D1}.  Likewise, let $u,v,v' \in
\inter \mN$ be such that $\ipairf(u,v) \in \inter \mQ_i$,
$\ipairf(u,v') \in \inter \mQ_j$, and $\ipairf(v,v') \in \inter S$,
for some $1 \leq i,j \leq \ell$.  Thus $\ipairf(u,v') \in \inter
(\mQ_i \circ \mS)$, so that from $\inter \models \vercom_i$ we obtain
$d_j \in V(d_i)$, proving \ref{D2}.

Conversely, let us suppose that $\dominoSys$ admits a compatible
tiling and let $(\partn_1, \ldots, \partn_\ell)$ be the induced
partitioning of $\peanN_{0} \times \peanN_{0}$ which
satisfies \ref{D1} and \ref{D2}, relative to the Peano system
$\peanSys_{0}=(\peanN_{0}, \emptyset, \peanS_{0},
\pi_{\mathsf{Kur}})$.
We prove that $\varphi_{\dominoSys}$ is satisfied
be any interpretation $\inter$ such that
\[
  \ipairf = \pi_{\mathsf{Kur}}\,, \qquad 
  \inter \mN = \peanN_{0}\,, \qquad 
  \inter \mZ = \emptyset\,, \qquad 
  \inter \mS = \peanS_{0}\,, \qquad 
  \inter \mQ_i = \partn_i\, ~(\text{for } i=1,\ldots,\ell)\,.   
\]
Plainly, $\inter$ models correctly $\isPeanoSys(\mN,\mZ,\mS)$.  In
addition, $\inter \models \isPartition(\mQ_1, \ldots, \mQ_\ell, \mN
\times \mN)$, as we assumed that $(\inter \mQ_1, \ldots, \inter
\mQ_\ell)=(\partn_1, \ldots, \partn_\ell)$ is a partitioning of
$\inter \mN \times \inter \mN = \peanN_{0} \times \peanN_{0}$.
Next we prove that $\inter$ models correctly the conjuncts $\horcom_i$
of $\varphi_{\dominoSys}$, for $i = 1,\ldots, \ell$.  To this purpose,
let $u$, $v$ be any two sets such that $\ipairf(u,v) \in \inter
(\mS^{-1} \circ \mQ_i)$, for some $1 \leq i \leq \ell$.  Then, there
must exist a set $u'$ such that $\ipairf(u',v) \in \inter \mQ_i$, and
$\ipairf(u',u) \in \inter \mS = \peanS_{0}$.  Hence $\ipairf (u,v)$
must belong to some $\partn_j = \inter \mQ_{j}$, for $1 \leq j \leq
\ell$, such that $d_j \in H(d_i)$, proving $\inter \models \horcom_i$.
Analogously, one can show that $\inter \models \vercom_i$, for $i =
1,\ldots, \ell$, thus proving that $\inter \models
\varphi_{\dominoSys}$ and in turn concluding the proof of the theorem.
\end{proof}

Because of the large number of set-theoretic constructs expressible in
\Lang, the undecidability of various other extensions of normalized
\Lang-conjunctions easily follows from Theorem \ref{PEANUNDEC}.

\begin{corollary}\label{OTHERUNDEC}
The class of normalized \Lang-conjunctions extended with two
literals of any of the following types is undecidable:
\begin{equation}
    \label{typesLiterals}
 \svx \subseteq \range(\mvx)\,, \qquad 
 \mvz \subseteq \mvx \circ \mvy\,, \qquad
 \svy \subseteq \mvx[\svx]\,,
\end{equation}
where $\svx, \svy \in \SetVars$ and $\mvx, \mvy, \mvz \in \MapVars$.
\end{corollary}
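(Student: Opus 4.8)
The plan is to derive each of the three undecidability results from Theorem~\ref{PEANUNDEC} by reducing the s.p.\ for $\Langdom$ to the s.p.\ for the corresponding extension. The key observation is that, although $\svx \subseteq \dom(\mvx)$ itself is not expressible in \Lang, I can re-express each such literal by a \emph{single} literal of the target type ($\svx \subseteq \range(\mvx)$, or $\mvz \subseteq \mvx \circ \mvy$, or $\svy \subseteq \mvx[\svx]$) conjoined with auxiliary constraints that are themselves expressible as normalized \Lang-conjunctions by Table~\ref{SETCONS} (namely map inverse $\mvy=\mvx^{-1}$, identity maps $\mvy=\identity{\svx}$, and range inclusions $\range(\mvx)\subseteq\svy$). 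Since the formula $\varphi_{\dominoSys}$ built in Theorem~\ref{PEANUNDEC} involves exactly two literals of the form $\svx\subseteq\dom(\mvx)$, each reduction keeps the count of target-type literals at two and is computable in linear time, so undecidability transfers.

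For the range case I would use the identity $\dom(\mvx)=\range(\mvx^{-1})$. Introducing a fresh map variable $\mvy$, I assert $\mvy=\mvx^{-1}$ (expressible in \Lang) and replace each occurrence of $\svx\subseteq\dom(\mvx)$ by $\svx\subseteq\range(\mvy)$; the two formulae are then equivalent in every interpretation, so the extended formula is equisatisfiable with $\varphi_{\dominoSys}$.

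For the composition case I would exploit that the diagonal of $\mvx\circ\mvx^{-1}$ is exactly $\identity{\dom(\mvx)}$: under the convention of Table~\ref{SETCONS}, a pair $[a,a]$ lies in $\mvx\circ\mvx^{-1}$ iff $[a,b]\in\mvx$ for some $b$, i.e.\ iff $a\in\dom(\mvx)$. Hence, introducing fresh $\mvy$ and $\mvz$ and asserting $\mvy=\mvx^{-1}$ and $\mvz=\identity{\svx}$ (both expressible in \Lang), the literal $\mvz\subseteq\mvx\circ\mvy$ holds iff $\svx\subseteq\dom(\mvx)$, which again yields an equisatisfiable $\Langdom$-formula using only two composition literals.

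The image case is the one I expect to require the most care, and it is the main obstacle, since the equivalence must be pinned down exactly. Here I would use $\dom(\mvx)=\mvx^{-1}[\range(\mvx)]$ together with the fact that an image is always contained in the range of the map: for any set $\svy$ one has $\mvy[\svy]\subseteq\range(\mvy)$, and $\mvy[\svy]=\range(\mvy)$ as soon as $\dom(\mvy)\subseteq\svy$. Thus, introducing fresh $\mvy$ and a fresh set variable $\svy$, and asserting $\mvy=\mvx^{-1}$ together with $\range(\mvx)\subseteq\svy$ (so that $\dom(\mvy)\subseteq\svy$), I would replace $\svx\subseteq\dom(\mvx)$ by $\svx\subseteq\mvy[\svy]$. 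In any model of the transformed formula, $\svx\subseteq\mvy[\svy]\subseteq\range(\mvy)=\dom(\mvx)$; conversely, any model of $\varphi_{\dominoSys}$ extends by setting $\mvy$ to $\mvx^{-1}$ and $\svy$ to $\range(\mvx)$, giving $\mvy[\svy]=\dom(\mvx)\supseteq\svx$. Verifying these set-theoretic identities and keeping the composition/image orientation consistent with Table~\ref{SETCONS} is the delicate part; once that is done, undecidability of all three classes in (\ref{typesLiterals}) follows from Theorem~\ref{PEANUNDEC}.
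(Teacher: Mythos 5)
Your proposal is correct and follows essentially the same route as the paper's proof: reduce from Theorem~\ref{PEANUNDEC} by re-expressing each literal $\svx \subseteq \dom(\mvx)$ as $\svx \subseteq \range(\mvx^{-1})$, as $\identity{\svx} \subseteq \mvx \circ \mvx^{-1}$, and as $\svx \subseteq \mvx^{-1}[R]$ for a fresh set variable $R$ (set to $\range(\inter \mvx)$ in the model construction), with the inverse and identity maps supplied by Table~\ref{SETCONS}, so that the count of target-type literals stays at two. The only difference is your extra conjunct $\range(\mvx) \subseteq R$ in the image case, which is harmless but unnecessary: $\mvx^{-1}[R] \subseteq \range(\mvx^{-1}) = \dom(\mvx)$ holds for an arbitrary $R$, which is exactly the paper's equisatisfiability argument with an unconstrained fresh variable.
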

\begin{proof}
In view of Theorem \ref{PEANUNDEC}, it is enough to show that any
literal of the form $\svx \subseteq \dom(\mvx)$ can be expressed with
normalized \Lang-conjunctions extended with \emph{one} literal of any
of the types (\ref{typesLiterals}).  Concerning the case of literals
of the types
$\svx \subseteq \range(\mvx)$, $\mvz \subseteq \mvx \circ \mvy$
it suffices to observe that $\svx \subseteq \dom(\mvx)$ is equivalent
to each of the two formulae
$\svx \subseteq \range (\mvx^{-1})$ and 
$\identity{\svx} \subseteq \mvx \circ \mvx^{-1}$,
and that map identity $\identity{\svx}$ and map inverse $\mvx^{-1}$ 
are expressible by \Lang-formulae, as shown in Table~\ref{SETCONS}.

Finally, concerning literals of the form $\svy \subseteq \mvx[\svx]$,
it is enough to observe that for every set variable $R_{\mvx}$
distinct from $\svx$ we have
\begin{itemize}
    \item $\inter \models \svx \subseteq \mvx^{-1}[R_{\mvx}] \rightarrow 
    x \subseteq \dom(\mvx)$, for every interpretation $\inter$;
    
    \item if $\inter \models x \subseteq \dom(\mvx)$, for some
    interpretation $\inter$, then $\interp \models \svx \subseteq
    \mvx^{-1}[R_{\mvx}]$, where $\interp$ is the
    $\{R_{\mvx}\}$-variant of $\inter$ such that $\interp R_{\mvx} =
    \range(\inter \mvx)$.
\end{itemize}
Therefore, a $\Langdom$-formula $\psi \defAs \varphi \wedge \svx
\subseteq \dom(\mvx) \wedge \svy \subseteq \dom(\mvy)$, where $\varphi$
is a normalized \Lang-conjunction, is equisatisfiable with $\varphi
\wedge \svx \subseteq \mvx^{-1}[R_{\mvx}] \wedge \svy \subseteq 
\mvy^{-1}[R_{\mvy}]$, where $R_{\mvx}$ and $R_{\mvy}$ are two fresh 
distinct set variables not occurring in $\psi$.
%
\end{proof}

In the proof of Theorem \ref{DEC} we provided a reduction of the s.p.\
for normalized \Lang-conjunctions to the s.p.\ for
\Forallpizero-formulae.  Therefore, the undecidability results of 
Theorem~\ref{PEANUNDEC} and Corollary~\ref{OTHERUNDEC} hold also for 
the corresponding extensions of \Forallpizero-formulae.

\section{Conclusions and plans for future works}
\label{CONC}

In this paper we presented a quantified sublanguage of set theory,
called \Lang, which extends the language $\forall_{0}$ studied in
\cite{BreFerOmoSch1981} with quantifiers involving ordered pairs.  We
reduced its satisfiability problem to the same problem for formulae of
the fragment studied in \cite{CanLonNic2011}.  The resulting decision
procedure runs in non-deterministic exponential time.  However, if one
restricts to formulae with quantifier prefixes of length bounded by a
constant, the decision procedure runs in non-deterministic polynomial
time.  It turns out that such restricted formulae still allow one to
express a large number of useful set-theoretic constructs, as reported
in Table \ref{SETCONS}.  
Finally, we also proved that by slightly extending \Lang-formulae 
with few literals (at least two) of any of the types $\svx
\subseteq \dom(\mvx)$, $\svx \subseteq \range(\mvx)$, $\svx \subseteq
\mvx[\svy]$, and $h \subseteq f \circ g$, one runs into 
undecidability.

Other extensions of \Lang are to be investigated, in particular those
involving the transitive closure of maps.  Also, the effects of
allowing nesting of quantifiers should be further studied, extending 
the recent results \cite{OmoPol2010, OmoPol2012} to our context.

In contrast with description logics, the semantics of our language is
\emph{multi-level}, as most of the languages studied in the context of
Computable Set Theory. This characteristic may play
a central role when applying set-theoretic languages to knowledge
representation, with particular reference to the \emph{metamodeling} issue
(see \cite{WelFer1994, Mot2007}), which affects the description logics
framework.
However, the multi-level feature is limited in \Lang-formulae, since
clauses like $\mvx \in \svx$, $[\mvx, \mvy] \in \mvz$, 
with $\svx$ a set variable and $\mvx$, $\mvy$, and $\mvz$ map
variables, are not expressible in it.
In light of this, we intend to investigate extensions of the theory
\Lang which also admit constructs of these forms, and study
applications of these in the field of knowledge representation.

Finally, we intend to study correlations between our language \Lang
and \emph{Disjunctive Datalog} (cf.\ \cite{EitGotMan1997}) in order to
use some of the machinery already available for the latter to
simplify the implementation of an optimized satisfiability test for
the whole fragment \Lang, or just for a \emph{Horn-like} restriction
of it.

\section*{Acknowledgments}
The authors would like to thank the reviewers for their valuable 
comments and suggestions.

\bibliographystyle{eptcs}

\begin{thebibliography}{10}
\providecommand{\bibitemdeclare}[2]{}
\providecommand{\surnamestart}{}
\providecommand{\surnameend}{}
\providecommand{\urlprefix}{Available at }
\providecommand{\url}[1]{\texttt{#1}}
\providecommand{\href}[2]{\texttt{#2}}
\providecommand{\urlalt}[2]{\href{#1}{#2}}
\providecommand{\doi}[1]{doi:\urlalt{http://dx.doi.org/#1}{#1}}
\providecommand{\bibinfo}[2]{#2}

\bibitemdeclare{book}{DLHANDBOOK2}
\bibitem{DLHANDBOOK2}
\bibinfo{editor}{Franz \surnamestart Baader\surnameend}, \bibinfo{editor}{Diego
  \surnamestart Calvanese\surnameend}, \bibinfo{editor}{Deborah~L.
  \surnamestart McGuinness\surnameend}, \bibinfo{editor}{Daniele \surnamestart
  Nardi\surnameend} \& \bibinfo{editor}{Peter~F. \surnamestart
  Patel-Schneider\surnameend}, editors (\bibinfo{year}{2007}):
  \emph{\bibinfo{title}{The Description Logic Handbook: Theory, Implementation,
  and Applications}}, \bibinfo{edition}{second} edition.
\newblock \bibinfo{publisher}{Cambridge University Press},
  \doi{10.1017/CBO9780511711787}.

\bibitemdeclare{inbook}{Ber1966}
\bibitem{Ber1966}
\bibinfo{author}{R.~\surnamestart Berger\surnameend} (\bibinfo{year}{1966}):
  \emph{\bibinfo{title}{The undecidability of the domino problem}}, pp.
  \bibinfo{pages}{1--72}.
\newblock {\sl \bibinfo{series}{Mem. Amer. Math. Soc.}}~\bibinfo{volume}{66},
  \bibinfo{publisher}{American Mathematical Society},
  \doi{10.1007/978-3-540-69407-6\_51}.

\bibitemdeclare{book}{BorGraGur1997}
\bibitem{BorGraGur1997}
\bibinfo{author}{Egon \surnamestart B{\"o}rger\surnameend},
  \bibinfo{author}{Erich \surnamestart Gr{\"a}del\surnameend} \&
  \bibinfo{author}{Yuri \surnamestart Gurevich\surnameend}
  (\bibinfo{year}{1997}): \emph{\bibinfo{title}{The Classical Decision
  Problem}}.
\newblock \bibinfo{series}{Perspectives in Mathematical Logic},
  \bibinfo{publisher}{Springer}.

\bibitemdeclare{article}{BreFerOmoSch1981}
\bibitem{BreFerOmoSch1981}
\bibinfo{author}{Michael \surnamestart Breban\surnameend},
  \bibinfo{author}{Alfredo \surnamestart Ferro\surnameend},
  \bibinfo{author}{Eugenio~G. \surnamestart Omodeo\surnameend} \&
  \bibinfo{author}{Jacob~T. \surnamestart Schwartz\surnameend}
  (\bibinfo{year}{1981}): \emph{\bibinfo{title}{Decision procedures for
  elementary sublanguages of set theory. {II}. {F}ormulas involving restricted
  quantifiers, together with ordinal, integer, map, and domain notions.}}
\newblock {\sl \bibinfo{journal}{Communications on Pure and Applied
  Mathematics}} \bibinfo{volume}{34}, pp. \bibinfo{pages}{177--195},
  \doi{10.1002/cpa.3160340203}.

\bibitemdeclare{book}{CanFerOmo89a}
\bibitem{CanFerOmo89a}
\bibinfo{author}{Domenico \surnamestart Cantone\surnameend},
  \bibinfo{author}{Alfredo \surnamestart Ferro\surnameend} \&
  \bibinfo{author}{Eugenio \surnamestart Omodeo\surnameend}
  (\bibinfo{year}{1989}): \emph{\bibinfo{title}{Computable set theory}}.
\newblock {\sl \bibinfo{series}{International Series of Monographs on Computer
  Science}}~\bibinfo{volume}{6}, \bibinfo{publisher}{Oxford Science
  Publications. Clarendon Press}, \bibinfo{address}{Oxford, {UK}}.

\bibitemdeclare{inproceedings}{CanLonNic2011}
\bibitem{CanLonNic2011}
\bibinfo{author}{Domenico \surnamestart Cantone\surnameend},
  \bibinfo{author}{Cristiano \surnamestart Longo\surnameend} \&
  \bibinfo{author}{Marianna~Nicolosi \surnamestart Asmundo\surnameend}
  (\bibinfo{year}{2011}): \emph{\bibinfo{title}{A {D}ecidable {Q}uantified
  {F}ragment of {S}et {T}heory {I}nvolving {O}rdered {P}airs with
  {A}pplications to {D}escription {L}ogics}}.
\newblock In \bibinfo{editor}{Marc \surnamestart Bezem\surnameend}, editor:
  {\sl \bibinfo{booktitle}{CSL 2011}}, {\sl
  \bibinfo{series}{LIPIcs}}~\bibinfo{volume}{12}, \bibinfo{publisher}{Schloss
  Dagstuhl - Leibniz-Zentrum fuer Informatik}, pp. \bibinfo{pages}{129--143},
  \doi{10.4230/LIPIcs.CSL.2011.129}.

\bibitemdeclare{inproceedings}{CanLonNic2010}
\bibitem{CanLonNic2010}
\bibinfo{author}{Domenico \surnamestart Cantone\surnameend},
  \bibinfo{author}{Cristiano \surnamestart Longo\surnameend} \&
  \bibinfo{author}{Marianna \surnamestart {Nicolosi Asmundo}\surnameend}
  (\bibinfo{year}{2010}): \emph{\bibinfo{title}{A {D}ecision {P}rocedure for a
  {T}wo-sorted {E}xtension of {M}ulti-{L}evel {S}yllogistic with the
  {C}artesian {P}roduct and {S}ome {M}ap {C}onstructs}}.
\newblock In \bibinfo{editor}{Wolfgang \surnamestart Faber\surnameend} \&
  \bibinfo{editor}{Nicola \surnamestart Leone\surnameend}, editors: {\sl
  \bibinfo{booktitle}{CILC2010 : 25th Italian Conference on Computational
  Logic}}.

\bibitemdeclare{inproceedings}{CanLonPis2010}
\bibitem{CanLonPis2010}
\bibinfo{author}{Domenico \surnamestart Cantone\surnameend},
  \bibinfo{author}{Cristiano \surnamestart Longo\surnameend} \&
  \bibinfo{author}{Antonio \surnamestart Pisasale\surnameend}
  (\bibinfo{year}{2010}): \emph{\bibinfo{title}{{C}omparing {D}escription
  {L}ogics with {M}ulti-Level {S}yllogistics: the {D}escription {L}ogic
  \dlmlsscart}}.
\newblock In: {\sl \bibinfo{booktitle}{6th Workshop on Semantic Web
  Applications and Perspectives (SWAP)}}.

\bibitemdeclare{book}{CanOmoPol01}
\bibitem{CanOmoPol01}
\bibinfo{author}{Domenico \surnamestart Cantone\surnameend},
  \bibinfo{author}{Eugenio \surnamestart Omodeo\surnameend} \&
  \bibinfo{author}{Alberto \surnamestart Policriti\surnameend}
  (\bibinfo{year}{2001}): \emph{\bibinfo{title}{Set theory for computing:
  {F}rom decision procedures to declarative programming with sets}}.
\newblock \bibinfo{series}{Monographs in Computer Science},
  \bibinfo{publisher}{Springer-Verlag}, \bibinfo{address}{New York, NY, USA}.

\bibitemdeclare{article}{CanSch91}
\bibitem{CanSch91}
\bibinfo{author}{Domenico \surnamestart Cantone\surnameend} \&
  \bibinfo{author}{Jacob~T. \surnamestart Schwartz\surnameend}
  (\bibinfo{year}{1991}): \emph{\bibinfo{title}{{D}ecision {P}rocedures for
  {E}lementary {S}ublanguages of {S}et {T}heory: {XI}. {M}ultilevel
  {S}yllogistic {E}xtended by {S}ome {E}lementary {M}ap {C}onstructs}}.
\newblock {\sl \bibinfo{journal}{J. Autom. Reasoning}}
  \bibinfo{volume}{7}(\bibinfo{number}{2}), pp. \bibinfo{pages}{231--256},
  \doi{10.1007/BF00243808}.

\bibitemdeclare{article}{EitGotMan1997}
\bibitem{EitGotMan1997}
\bibinfo{author}{Thomas \surnamestart Eiter\surnameend}, \bibinfo{author}{Georg
  \surnamestart Gottlob\surnameend} \& \bibinfo{author}{Heikki \surnamestart
  Mannila\surnameend} (\bibinfo{year}{1997}):
  \emph{\bibinfo{title}{{D}isjunctive {D}atalog}}.
\newblock {\sl \bibinfo{journal}{ACM Trans. Database Syst.}}
  \bibinfo{volume}{22}(\bibinfo{number}{3}), pp. \bibinfo{pages}{364--418},
  \doi{10.1145/261124.261126}.

\bibitemdeclare{article}{FerOmoSch1980}
\bibitem{FerOmoSch1980}
\bibinfo{author}{Alfredo \surnamestart Ferro\surnameend},
  \bibinfo{author}{Eugenio~G. \surnamestart Omodeo\surnameend} \&
  \bibinfo{author}{Jacob~T. \surnamestart Schwartz\surnameend}
  (\bibinfo{year}{1980}): \emph{\bibinfo{title}{Decision Procedures for
  Elementary Sublanguages of Set Theory. I. Multi-level syllogistic and some
  extensions.}}
\newblock {\sl \bibinfo{journal}{Comm. Pure Appl. Math.}}
  \bibinfo{volume}{XXXIII}(\bibinfo{number}{5}), pp. \bibinfo{pages}{599--608},
  \doi{10.1002/cpa.3160330503}.

\bibitemdeclare{inproceedings}{FOS80}
\bibitem{FOS80}
\bibinfo{author}{Alfredo \surnamestart Ferro\surnameend},
  \bibinfo{author}{Eugenio~G. \surnamestart Omodeo\surnameend} \&
  \bibinfo{author}{Jacob~T. \surnamestart Schwartz\surnameend}
  (\bibinfo{year}{1980}): \emph{\bibinfo{title}{Decision Procedures for Some
  Fragments of Set Theory}}.
\newblock In \bibinfo{editor}{Wolfgang \surnamestart Bibel\surnameend} \&
  \bibinfo{editor}{Robert~A. \surnamestart Kowalski\surnameend}, editors: {\sl
  \bibinfo{booktitle}{CADE}}, {\sl \bibinfo{series}{Lecture Notes in Computer
  Science}}~\bibinfo{volume}{87}, \bibinfo{publisher}{Springer}, pp.
  \bibinfo{pages}{88--96}, \doi{10.1007/3-540-10009-1\_8}.

\bibitemdeclare{book}{Mos2005}
\bibitem{Mos2005}
\bibinfo{author}{Yiannis \surnamestart Moschovakis\surnameend}
  (\bibinfo{year}{2005}): \emph{\bibinfo{title}{Notes on Set Theory}},
  \bibinfo{edition}{second} edition.
\newblock \bibinfo{publisher}{Springer}.

\bibitemdeclare{article}{Mot2007}
\bibitem{Mot2007}
\bibinfo{author}{Boris \surnamestart Motik\surnameend} (\bibinfo{year}{2007}):
  \emph{\bibinfo{title}{On the {P}roperties of {M}etamodeling in {OWL}}}.
\newblock {\sl \bibinfo{journal}{J. Log. Comput.}}
  \bibinfo{volume}{17}(\bibinfo{number}{4}), pp. \bibinfo{pages}{617--637},
  \doi{10.1093/logcom/exm027}.

\bibitemdeclare{article}{OmoPol2010}
\bibitem{OmoPol2010}
\bibinfo{author}{Eugenio \surnamestart Omodeo\surnameend} \&
  \bibinfo{author}{Alberto \surnamestart Policriti\surnameend}
  (\bibinfo{year}{2010}): \emph{\bibinfo{title}{The
  {B}ernays-{S}ch{\"o}nfinkel-{R}amsey {C}lass for {S}et {T}heory:
  {S}emidecidability}}.
\newblock {\sl \bibinfo{journal}{Journal of Symbolic Logic}}
  \bibinfo{volume}{75}(\bibinfo{number}{2}), pp. \bibinfo{pages}{459--480},
  \doi{10.2178/jsl/1268917490}.

\bibitemdeclare{article}{OmoPol2012}
\bibitem{OmoPol2012}
\bibinfo{author}{Eugenio \surnamestart Omodeo\surnameend} \&
  \bibinfo{author}{Alberto \surnamestart Policriti\surnameend}
  (\bibinfo{year}{2012}): \emph{\bibinfo{title}{The
  {B}ernays-{S}ch{\"o}nfinkel-{R}amsey {C}lass for {S}et {T}heory:
  {D}ecidability}}.
\newblock {\sl \bibinfo{journal}{Journal of Symbolic Logic, to appear}}.

\bibitemdeclare{book}{SchCanOmo11}
\bibitem{SchCanOmo11}
\bibinfo{author}{Jacob~T. \surnamestart Schwartz\surnameend},
  \bibinfo{author}{Domenico \surnamestart Cantone\surnameend} \&
  \bibinfo{author}{Eugenio~G. \surnamestart Omodeo\surnameend}
  (\bibinfo{year}{2011}): \emph{\bibinfo{title}{Computational logic and set
  theory: Applying formalized logic to analysis}}.
\newblock \bibinfo{publisher}{Springer-Verlag},
  \doi{10.1007/978-0-85729-808-9}.
\newblock \bibinfo{note}{Foreword by M. Davis}.

\bibitemdeclare{book}{SchDewSchDub1986}
\bibitem{SchDewSchDub1986}
\bibinfo{author}{Jacob~T. \surnamestart Schwartz\surnameend},
  \bibinfo{author}{Robert B.~K. \surnamestart Dewar\surnameend},
  \bibinfo{author}{Edmond \surnamestart Schonberg\surnameend} \&
  \bibinfo{author}{E~\surnamestart Dubinsky\surnameend} (\bibinfo{year}{1986}):
  \emph{\bibinfo{title}{Programming with sets; an introduction to SETL}}.
\newblock \bibinfo{publisher}{Springer-Verlag New York, Inc.},
  \bibinfo{address}{New York, NY, USA}.

\bibitemdeclare{techreport}{WelFer1994}
\bibitem{WelFer1994}
\bibinfo{author}{Christopher~A. \surnamestart Welty\surnameend} \&
  \bibinfo{author}{David~A. \surnamestart Ferrucci\surnameend}
  (\bibinfo{year}{1994}): \emph{\bibinfo{title}{What's in an instance?}}
\newblock \bibinfo{type}{Technical Report}, \bibinfo{institution}{RPI Computer
  Science}.

\end{thebibliography}

\end{document}